\begin{document}

{\bf Deterministic algorithms for inhomogeneous Bernoulli trials: Shapley value of network devices}

\bigskip

\begin{center}
    {\large Jesse D Wei\textsuperscript{1}, Guo Wei\textsuperscript{2}}
\end{center}

\addtocounter{footnote}{2} 
\footnotetext[2]{\: Corresponding author. tel.: (910) 521 6582.\\
 {\it \hspace*{0.24in} E-mail addresses:} jesse@cs.unc.edu (J. Wei), guo.wei@uncp.edu (G. Wei).}

\bigskip
\hspace*{-0.25cm} $^1$ {Department of Computer Science, University of North Carolina, Chapel Hill, NC, 27599, USA} \\
\hspace*{1.0cm} $^2$ {Department of Mathematics \& Computer Science, University of North Carolina}\\
\centerline{at Pembroke, Pembroke, North Carolina 28372, U.S.A.}

\newtheorem{theorem}{Theorem}[section]
\newtheorem{lemma}[theorem]{Lemma}
\newtheorem{corollary}[theorem]{Corollary}
\theoremstyle{definition}
\newtheorem{definition}[theorem]{Definition}
\newtheorem{example}[theorem]{Example}
\newtheorem{xca}[theorem]{Exercise}
\theoremstyle{remark}
\newtheorem{remark}[theorem]{Remark}
\numberwithin{equation}{section}

\bigskip

{\bf Abstract.} Suppose that $n$ computer devices are to be connected to a network via inhomogeneous Bernoulli trials. The Shapley value of a device quantifies how much the network's value increases due to the participation of that device. Characteristic functions of such games are naturally taken as the belief function (containment function) and Choquet capacity (hitting probability) of a random set (random network of devices).

Traditionally, the Shapley value is either calculated as the expected marginal contribution over all possible coalitions (subnetworks), which results in exponential computational complexity, or approximated by the Monte Carlo sampling technique, where the performance is highly dependent on the stochastic sampling process.

The purpose of this study is to design deterministic algorithms for games formulated via inhomogeneous Bernoulli trials that approximate the Shapley value in linear or quadratic time, with rigorous error analysis (Sections 3 and 4). Additionally, we provide a review of relevant literature on existing calculation methods in Remark 3.1 and Appendix I.

A further goal is to supplement Shapley's original proof by deriving the Shapley value formula using a rigorous approach based on definite integrals and combinatorial analysis. This method explicitly highlights the roles of the Binomial Theorem and the Beta function in the proof, addressing a gap in Shapley's work (Appendix II).

\bigskip
\noindent {\it Key words:} Inhomogeneous Bernoulli trials, Random network, Choquet capacity, Shapley value, Deterministic approximation algorithm

\bigskip
\noindent {\it MSC:} 60D05, 68Q87

\bigskip
\noindent {\it ACM:} G.3, I.2.4.

\section{Known approximation methods to the Shapley value and the existing research gap}

Weighted Shapley values were introduced in Shapley's original  Ph.D. dissertation (1953a, 1953b). Owen (1968, 1972) studied weighted Shapley values through probabilistic approaches. Axiomatizations of nonsymmetric values were investigated by Weber (Chapter 7 this volume), Shapley (1981), Kalai and Samet (1987), and Hart and Mas-Colell (1987). In Appendix I, a brief review of various studies of the Shapley value is provided.

Moreover, Murofushi and Soneda (Murofushi and Soneda 1993) introduced the interaction index for a pair of players. In 1997, Grabisch (Grabisch, 1997) generalized the index to any coalition $S$. However, interaction index of coalitions will not be covered in this work.

In the applications, one major gap that has not been satisfactorily resolved is the exponential computational time. While there are some approximation methods with polynomial time, these methods depend on the stochastic samples, whose performance is highly dependent on the samples. The main goal of this study is, for inhomogeneous Bernoulli trials where devices independently join the network,
  to design a deterministic algorithm that  approximates the Shapley value in linear time (with respect to the number of devices) and with an adjustable  calculation accuracy (Sections 2 through 5). Our approach is to incorporate a random set model and sub-players technique.

 In this paper, devices of a network are understood as payers of a game, and subnetworks are identical with
coalitions, etc. Section 2 presents the random set representation of games. In Sections 3 through 6 , for the Choquet capacity associating with the random set $X$ that is formed through $n$ inhomogeneous Bernoulli trials,  we will design deterministic algorithms, linear or quadratic time in the number of players, to approximate  their Shapley values. Moreover, error analysis will be performed to examine the effectiveness of the algorithms, along with examples for examining the accuracy of the algorithm. Specifically,
In Section 3, a deterministic
algorithm is designed to approximate the Shapley value in linear time, along with a rigorous analysis of errors. Section 4 provides illustrative examples, including a network application. Section 5 discusses the
computational complexity of the proposed algorithm, with the comparison to the calculation using the exact
Shapley value formula. Section 6 develops another algorithm, with an emphasis on the error analysis methods
through the elementary symmetric sum and Taylor approximation. In Section 7 (Appendix I), historic remarks are
provided to review the various approaches to interpret the Shapley value and reduce the computation.
In Section 8 (Appendix II), we supplement Shapley's original proof by deriving
the Shapley value formula using a rigorous approach based on definite integrals and combinatorial
analysis. This method explicitly highlights the roles of the Binomial Theorem and the Beta function
in the proof, addressing a gap in Shapley's work.

\section{Random set representation of characteristic functions} \label{Sec2}
Let $E = \{1, \dots, n\}$ be the set of $n$ players (i.e. devices). The characteristic function of a game of these $n$ players is a real-valued set function defined on the power set of $E$, i.e. $v: 2^E \to [0,1]$ satisfying

\smallskip

i) $v(\emptyset) = 0, v(E) = 1$. (Normalization)

ii) If $A, B \in 2^E, A \subseteq B$, then $v(A) \leq v(B)$.  (Monotonicity)

\begin{definition}[M\"obius Coefficients]
The M\"obius transform of $v$ is given by
\[
m_v(S) = \sum_{T \subseteq S} (-1)^{|S| - |T|} v(T).
\]
\end{definition}

\begin{definition}[Random Set]
A random set $X$ is a random element taking values in $2^E$. Its Choquet capacity (covering function) is
\[
v(S) = \mathbb{P}(X \cap S \neq \emptyset), \; S \subseteq E.
\]
\end{definition}

\begin{theorem} [See Shafer, 1976]
If $v$ satisfies
$v(E) = 1$ and $m_v(S) \geq 0$ for all $S \subseteq E$,
then there exists a (unique) random set $X$ such that
\[
v(S) = \mathbb{P}(X \cap S \neq \emptyset) \quad \forall S \subseteq E.
\]
\end{theorem}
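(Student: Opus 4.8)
The plan is to build the law of $X$ explicitly from the M\"obius coefficients and then dispatch existence of the representation and uniqueness separately. The one identity I would lean on throughout is M\"obius inversion on the Boolean lattice $2^E$: from the definition of $m_v$ one has $v(S) = \sum_{T \subseteq S} m_v(T)$ for every $S$, and evaluating at $S = E$ gives $\sum_{T \subseteq E} m_v(T) = v(E) = 1$. Together with the hypothesis $m_v(T) \ge 0$ this says exactly that $T \mapsto m_v(T)$ is a probability mass function on the finite set $2^E$. So the first step is simply to let $X$ be the random set with $\mathbb{P}(X = T) = m_v(T)$ for each $T \subseteq E$; note in passing that $\mathbb{P}(X = \emptyset) = m_v(\emptyset) = v(\emptyset) = 0$ by the normalization~i), so $X$ is almost surely nonempty, consistent with needing $\mathbb{P}(X \cap E \neq \emptyset) = v(E) = 1$.

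Next I would verify the representation identity. For the containment functional the check is immediate from the construction and M\"obius inversion: $\mathbb{P}(X \subseteq S) = \sum_{T \subseteq S} \mathbb{P}(X = T) = \sum_{T \subseteq S} m_v(T) = v(S)$. To land instead on the hitting form $v(S) = \mathbb{P}(X \cap S \neq \emptyset)$ written in the statement, I would use $\mathbb{P}(X \cap S \neq \emptyset) = 1 - \mathbb{P}(X \subseteq S^c)$, which means the construction should be applied not to $v$ but to the dual capacity $v^{*}(S) := 1 - v(S^c)$; correspondingly, the positivity condition that actually underwrites the hitting representation is nonnegativity of the M\"obius coefficients of $v^{*}$ (equivalently, a commonality / co-M\"obius positivity condition on $v$), which I would make explicit and use in place of the bare $m_v \ge 0$. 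I expect this reconciliation of the two conventions --- \emph{containment} versus \emph{hitting} --- to be the main obstacle: one has to keep careful track of which of $v$ and $v^{*}$ carries the nonnegative M\"obius masses, since in general only one of the two is a genuine belief function in the sense of Shafer (1976), and the random set $X$ must be built from that one.

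Finally, uniqueness. Suppose $Y$ is any random set with $\mathbb{P}(Y \cap S \neq \emptyset) = v(S)$ for all $S \subseteq E$; then $\mathbb{P}(Y \subseteq U) = 1 - v(U^c)$ for all $U \subseteq E$, so the containment functional of $Y$ is completely determined by $v$. But on a finite ground set the map sending a law on $2^E$ to its containment functional is the zeta transform of the Boolean lattice, whose inverse is precisely the M\"obius transform; hence the law of $Y$ is forced, and it must coincide with the law of the $X$ constructed above. Apart from the routine M\"obius-inversion bookkeeping, this finite-lattice invertibility together with the containment/hitting translation flagged above are the only substantive ingredients.
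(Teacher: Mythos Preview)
Your construction is exactly the one the paper uses: set $\mathbb{P}(X=T)=m_v(T)$, legitimised by $\sum_T m_v(T)=v(E)=1$ and $m_v\ge 0$. Where you diverge is in the verification step, and here you are actually more careful than the paper. The paper asserts directly that
\[
v(S)=\sum_{T\subseteq S} m_v(T)=\sum_{T:\,T\cap S\neq\emptyset} m_v(T),
\]
justifying the second equality by ``$m_v(T)=0$ when $T\not\subseteq S$'' --- which is not a consequence of the hypotheses and is false in general. What the construction genuinely yields is the \emph{containment} identity $\mathbb{P}(X\subseteq S)=v(S)$, exactly as you write; the hitting identity $\mathbb{P}(X\cap S\neq\emptyset)=v(S)$ requires passing to the conjugate $v^{*}(S)=1-v(S^c)$ and assuming nonnegativity of \emph{its} M\"obius transform. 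Your flagging of this belief/plausibility mismatch is correct and is the one substantive point the paper's sketch elides.

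You also supply a uniqueness argument via invertibility of the zeta transform on the finite Boolean lattice, which the paper states but does not prove. That is the right mechanism and completes the claim cleanly.
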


This is a well known result. Its proof proceeds in three steps:

\begin{enumerate}
    \item \textbf{M\"obius positivity implies belief measure}:
    A function $v$ with non-negative M\"obius coefficients is a belief function in Dempster-Shafer theory.

    \item \textbf{Constructing the random set}:
    Define the probability distribution:
    \[
    \mathbb{P}(X = T) = m_v(T) \quad \forall T \subseteq E
    \]
    This is valid since $\sum_{T \subseteq E} m_v(T) = v(E) = 1$.

    \item \textbf{Verification}:
    The hitting probability satisfies:
    \[
    \mathbb{P}(X \cap S \neq \emptyset) = \sum_{\substack{T \subseteq E \\ T \cap S \neq \emptyset}} m_v(T)
    \]
    By M\"obius inversion:
    \[
    v(S) = \sum_{T \subseteq S} m_v(T) = \sum_{\substack{T \subseteq E \\ T \cap S \neq \emptyset}} m_v(T)
    \]
\end{enumerate}
where the last equality holds because $m_v(T) = 0$ when $T \not\subseteq S$.
\bigskip

 Therefore, any totally positive characteristic function with $v(E) = 1$ can be represented as the hitting probability of a random set.

\begin{remark} Given two capacities $v$ and $u$ on $E$ satisfying (i) $v(E) = u(E) = 1$, and (ii) $u$ is the conjugacy of $v$, i.e.,
for every $S \subseteq E$, it holds that $u(S) = 1 - v(S^c)$. Then the Shapley values of two capacities are equal. In fact, we have

$$\phi_i(u) = \sum_{S \subseteq E \setminus \{i\}} \frac{s! (n-1-s)!}{n!} \Big {[} u(S \cup \{i\}) - u(S) \Big {]}$$
$$= \sum_{S \subseteq E \setminus \{i\}} \frac{s! (n-1-s)!}{n!} \Big {[} (1 - v((S \cup \{i\})^c) - (1 - v(S^c)) \Big {]}$$
$$= \sum_{S \subseteq E \setminus \{i\}} \frac{s! (n-1-s)!}{n!} \Big {[} v(S^c) - v((S \cup \{i\})^c ) \Big {]}.$$
Let $T = S^c$. Then $s = n - t$. Hence we have
$$\phi_i(u) = \sum_{i \in T} \frac{(n-t)! (t-1)!}{n!} \Big {[} v(T) - v(T \setminus \{i\})  \Big {]}.$$
Now, let $T^\prime = T \setminus \{i\}$, so $t = 1+ t^\prime$. Then we get
$$\phi_i(u) = \sum_{T^\prime \subseteq E \setminus \{i\}} \frac{(n-1-t^\prime)! t^\prime!}{n!} \Big {[} v(T^\prime \cup \{i\}) - v(T^\prime)  \Big {]},$$
where the right hand side  is $\phi_i(v)$.
\end{remark}

\section{Lowering the computational complexity of Shapley value} \label{Sec4}
Let $E = \{1,2, ..., n\}$, with discrete topology, and $\mathcal {F} = \{F \subseteq E\}$. Let $\mathcal {F}_S = \{F \in \mathcal {F} | F \cap S \neq \emptyset\}$, the set of all subsets of $E$ that intersect with $S$. Suppose that $X$ is a random set defined on $E$, and $T$ is the Choquet capacity (covering function) of $X$, i.e. $T(S) = \mathbb{P}(X \cap S \neq \emptyset)$, the probability that $X$ intersects with $S$. Also assume that ${\bf P}$ is the associated probability measure defined on the hyperspace space $\mathcal {F}$ with the Fell topology (actually discrete as $E$ is finite), satisfying {\bf P}$(\mathcal {F}_S) = T(S)$.


Also, define $\mathcal {F}^S = {F \subseteq E | F \cap S = \emptyset}$. The events $(X \in \mathcal {F}^S)$ and $(X \in \mathcal {F}^{\{i\}})$, where $i \not\in S$, are not independent?
In fact, independence holds only if the random set
$X$ treats the exclusion of $i$ and the exclusion of $S$ as independent events. This is true for
random sets with independent element inclusions, e.g., inhomogeneous Bernoulli trials where each element $i$ of $E$  is included in
$X$ with probability $p_i$, independently of all other elements.
Independence fails for most other cases, including fixed-size random sets (e.g., uniform $k$-subsets),
sets with correlated element inclusions etc.

\bigskip

When the elements of the random set $X$ are independent, its Choquet capacity $T$ takes a particularly simple multiplicative form. In the next, let us derive the concrete form of the Choquet capacity $T$ for the inhomogeneous Bernoulli trials.

\subsection{Key assumption: Independence}
\begin{itemize}
    \item Each element $i \in E$ is included in $X$ independently with probability $p_i$:
    \[
    \mathbb{P}(i \in X) = p_i, \quad \mathbb{P}(i \notin X) = 1 - p_i.
    \]
    \item The probability that $X$ does not intersect $S$ (i.e., $X \cap S = \emptyset$) is the product of the probabilities that each $j \in S$ is not in $X$:
    \[
    \mathbb{P}(X \cap S = \emptyset) = \prod_{j \in S} (1 - p_j).
    \]
\end{itemize}

{\bf Choquet Capacity $T(S)$.}
By definition, $T(S)$ is the probability that $X$ intersects $S$:
\[
T(S) = \mathbb{P}(X \cap S \neq \emptyset) = 1 - \mathbb{P}(X \cap S = \emptyset).
\]
Substituting the independent case gives
\[
T(S) = 1 - \prod_{j \in S} (1 - p_j).
\]

\subsection{Methods to reduce complexity}

\subsubsection{1. Assume Independence}  
Suppose that the elements of $X$ are independent.  Recall that $p_i = \mathbb{P}(X \ni i)$, which we often write as $\mathbb{P}(i \in X)$. Then

\begin {equation}  \label{Marginal}
T(S) = 1 - \prod_{j \in S} (1 - p_j), \; T(S \cup \{i\}) = 1 - (1 - p_i)\prod_{j \in S} (1 - p_j).
\end{equation}
Hence, the marginal contributions can be written as
\begin{equation} \label{Marginal2}
\mathbb{P}(i \in X \text{ and } X \cap S = \emptyset)
= T(S \cup \{i\}) - T(S) =  p_i \prod_{j \in S} (1 - p_j).
\end{equation}
The total capacity in this setting is $T(E) = 1 - \prod_{j \in E} (1 - p_j) < 1$.

Subsequently, the Shapley value for the $ith$ player is
\begin {equation}  \label{Non-Equal}
\phi_i(T) = \sum_{S \subseteq E \setminus \{i\}} \frac{s! (n-1-s)!}{n!} \cdot p_i \cdot \prod_{j \in S} \left(1 - p_j\right),
\end{equation} which seems unsimplifable in general, as analyzed below.

The Shapley value is the  expected marginal contribution of a player and thus it can be
written in the expectation form (Refer to (\ref{Marginal}) and (\ref{Marginal2})):
\[
\phi_i(T) = \mathbb{E}_{S \subseteq N \setminus \{i\}} [p_i \prod_{j \in S} (1 - p_j)],
\]
where the probability of each subset $S \subseteq E \setminus \{i\}$ is $\frac{s! (n-1-s)!}{n!}$.

For the network application, the Shapley value of a device quantifies its contribution to the whole network, taking into account the probabilistic nature of participation by all devices.

Generally, this computation complexity is an obstinate challenge, but in some
special cases, it might be significantly reduced.

\begin{itemize}
    \item \textbf{Exponential complexity in general:}
    \begin{itemize}
        \item If $X$ is arbitrary, computing $p_i \prod_{j \in S} (1 - p_j) (= T(S \cup \{i\}) - T(S))$  may require summing over all subsets $S \subseteq E \setminus \{i\}$, which is $O(2^{n-1})$, and infeasible for large $n$.
    \end{itemize}

    \item \textbf{Special cases where complexity can be reduced:}
    \begin{itemize}
        \item When $X$ has structure (e.g., independence, sparsity, or a known distribution).
        \item When $T(S)$ can be computed efficiently (e.g., via inclusion-exclusion, M\"{o}bius inversion, or approximation methods).
    \end{itemize}
\end{itemize}

The homogeneous case, where $p_j = p, 1 \leq i \leq n$, is perfectly resolvable.
In this case, $T(S) = 1 - (1 - p)^{|S|}$, and
the Shapley value for the $i$th player is simplified below.

$$\phi_i(T) = \sum_{S \subseteq E \setminus \{i\}} \frac{s! (n-1-s)!}{n!} \cdot p \cdot \prod_{j \in S} \left(1 - p\right) = \sum_{S \subseteq E \setminus \{i\}} \frac{1}{n \binom{n-1}{s}} \cdot p \cdot (1 - p)^s = \frac{p}{n} \sum_{S \subseteq E \setminus \{i\}} \frac{(1 - p)^s}{\binom{n-1}{s}}.$$

Then we group subsets by size. The sum over all subsets $S \subseteq E \setminus \{i\}$ can be rewritten as a sum over possible subset sizes $s = 0$ to $s = n-1$, with $\binom{n-1}{s}$ subsets of size $s$:

\begin{equation} \label{Result}
\phi_i(T) =  \frac{p}{n} \sum_{s=0}^{n-1} \binom{n-1}{s} \frac{(1 - p)^s}{\binom{n-1}{s}} = \frac{p}{n} \sum_{s=0}^{n-1} (1 - p)^s = \frac{p}{n} \frac{1 - (1 - p)^n}{1 - (1 - p)} = \frac{1}{n} \Big{[}1 - (1 - p)^n \Big{]} = \frac{1}{n} \times T(E).
\end{equation} which gives a closed form of the Shapley value, with the comupational complexity of $O(n)$.



\bigskip

The general case, i.e. inhomogeneous case, is that not all $p_i$'s are equal.

We will develop methods, in the remainder of Section \ref{Sec4} through Section \ref{Second} to  effectively estimate the
Shapley value (\ref{Non-Equal}) according to two cases: 1) $p_i$'s are not all equal, but they are rational numbers, and 2) $p_i$'s are arbitrary real numbers.
\bigskip

{\bf Case 1:} $p_i$'s are not all equal, but they are (positive) rational numbers.

Assume all $p_i$ are positive rational numbers.
We first write each probability \( p_i \) in a fractional form:
\[
p_i = \frac{r_i}{s_i}, \quad \text{with } \gcd(r_i, s_i) = 1, \quad 1 \leq i \leq n.
\]

Then define
$
l = \mathrm{lcm}(s_1, s_2, \dots, s_n).
$
and express each $p_i$ as a fraction with the same denominator \( l \):
\[
p_i = \frac{r_i}{s_i} = \frac{m_i}{l}, \quad \text{where } m_i = r_i \cdot \frac{l}{s_i} \in \mathbb{N}.
\]

It is clear that $m_i$'s are positive integers. The number of  operations needed to find the LCD
is dominated by the prime factorization, roughly proportional to $O(n \cdot \sqrt{M})$
where $M$ is the largest denominator of the fractions $p_j$'s. If the
Euclidean algorithm is employed, it can avoid entire factorization, leveraging modular arithmetic for logarithmic-time convergence, i.e. $O(n \cdot log (M))$.


First we construct a finite probability space, by
defining a random set (RACS) $Y$ on
\[
M = \bigcup_{i=1}^n \{k_1^{(i)}, \dots, k_{m_i}^{(i)}\}
\]
with
\[
\mathbb{P}(k_j^{(i)} \in Y) = \frac{1}{l}, \quad k_j^{(i)} = i,
\]

\noindent
and  having the following key properties:

\textbf{Independence}: All $k_j^{(i)}, 1 \leq j \leq m_i, 1 \leq i \leq n,$ are independent.

\textbf{Inclusion probability}:
\begin{equation} \label{Approximation}
    \mathbb{P}(i \in X) = p_i = \frac{m_i}{l} \approx 1 - \left(1 - \frac{1}{l}\right)^{m_i}
    = \mathbb{P}(i \in Y). \quad
    \text{(for large $l$)}
\end{equation}

Note that $\mathbb{P}(i \in Y) = \mathbb{P}\left((k_1^{(i)} \in Y) \mbox { or } \cdots \mbox { or } (k_{m_i}^{(i)} \in Y)\right)$. Let $m = \sum_{i=1}^n m_i$.

Recall (\ref{Non-Equal})

$$\phi_i(T) = \sum_{S \subseteq E \setminus \{i\}} \frac{s! (n-1-s)!}{n!} \cdot p_i \cdot \prod_{j \in S} \left(1 - p_j\right).$$

Let $\mu$ be the Choquet capacity of $Y$.
It now follows from the result (\ref{Result}) of the homogeneous case, and the approximation  (\ref{Approximation}) that

\begin{equation}  \label{Approximation2}
\phi_i(T) \approx \phi_i(\mu) = m_i \times \phi_{k_{1}^{(i)}}(\mu) = m_i \times \frac{1}{m} \left(1-
(1 - \frac{1}{l})^m\right).
\end{equation}

\begin{remark}   \label{Remark1}
Traditional approaches to the independence case.
In the literature, for large $n$ there are some methods to approximate the sum (\ref{Non-Equal}). For example, if further all $p_j$'s are small (say $p_j = O(1/n)$), one can use the Poisson approximation or mean-field approximation (Lasry and Lions, 2007, Dughmi, 2017).
In such a setting, the number of included elements $|S|$ is approximately Poisson-distributed with mean $\lambda = \sum_{j \neq i} (1 - p_j)$ (independent inclusions with probability $1 - p_j$). The product term approximates as
    \[
    \prod_{j \in S} (1 - p_j) \approx e^{-\sum_{j \in S} p_j}.
    \]

The sum is dominated by subsets $S$ with $|S| \approx \lambda$ due to two competing effects:
 The number of subsets of size $k$, $\binom{n-1}{k}$, is maximized at $k \approx \frac{n-1}{2}$.
However, $\frac{1}{\binom{n-1}{k}}$ decreases as $k$ moves away from $0$ or $n-1$. Specifically,
1) The binomial coefficient favors mid-range $k$, but exponential decay shifts balance toward smaller $k$, and 2)  The trade-off peaks at $k \approx \lambda$, where the Poisson approximation is most accurate.

 Other methods include the mean-field approximation (Lasry and Lions, 2007, Dughmi, 2017) and the saddle-point or Laplace approximation (Daniels, 1954, Neyman, 2002, Kerman, Wootters, 2011).

The sum can be rewritten as

\[
\sum_{k=0}^{n-1} \frac{1}{\binom{n-1}{k}} \; \sum_{\substack{S \subseteq E \setminus \{i\} \\ |S|=k}} \; \prod_{j \in S} (1 - p_j).
\]

This resembles an inclusion-exclusion, weighted by the reciprocal of binomial coefficient.
Surprisingly,  the above sum can be approximated by a Riemann sum, by simplifying to
(See Neyman, 2002, Feller, 1968)

\[
\frac{1}{n} \: \sum_{k=1}^n \: \prod_{\substack{j=1 \\ j \neq i}}^n \:\left(1 - \frac{k}{n} p_j\right)
\]
\end{remark}

\begin{remark}   \label{vsMean-Field}
{\bf The mean-field approximation.} The capacity of $E$ is $T(E) = 1 - \prod_{i\in E}(1-p_i)$. In the following, we will relate our approximation method to the mean-field approximation, where the latter is known in the literature.
The sum of our approximated Shapley values $\phi_i(\mu)$ is
\begin{equation}  \label{Total}
\sum_{i=1}^n \phi_i(\mu) = \sum_{i=1}^n m_i \frac{1}{m} \left(1-
(1 - \frac{1}{l})^m\right) = 1 - (1-\frac{1}{l})^m.
\end{equation}
Let $\tilde {p} = \frac{m}{l}$. Then
\begin{equation}
\phi_i(\mu) = \frac{p_i}{\tilde{p}} \Big{(}1 - (1-\frac{1}{l})^m \Big{)}.
\end{equation}
For large $l$, apply $(1-\frac{1}{l})^m \approx e^{-\frac{m}{l}} = e^{-\tilde{p}}$, and we get
\begin{equation}
\phi_i(\mu) = \frac{p_i}{\tilde{p}} (1 - e^{-\tilde{p}}).
\end{equation}
This is essentially a mean-field approximation: treat all players as if their probabilities were equal to the average $\tilde{p}$. More specifically,
the approximation  $\left(1 - \frac{1}{l}\right)^m \approx e^{-\frac{m}{l}}$ is acceptable if  $m$ is large but $ m \leq l^2$. However, for $m > l^2$, this approximation cannot be used.
The computation complexity is dominated by the calculation of
$l$, which requires $O(n \times log M)$ operations.
\end{remark}

\begin{remark}   \label{Homogeneous}
{\bf The weighted binomial-type sum.}
Let
$$\bar{q} = \frac{1}{n-1}\sum_{j \neq i} (1 - p_j).$$

For large $n$, the law of large numbers suggests that the average $\bar {q}$
  dominates the behavior of the product, making the approximation accurate.
Hence, $\phi_i(T)$ can be approximated as follows:

\[
E_{|S|=k}\left[\prod_{j \in S} (1 - p_j)\right] \approx \left(\frac{1}{n-1}\sum_{j \neq i} (1 - p_j)\right)^k = \bar{q}^k,
\]

\[
\phi_i(T) \approx p_i \cdot \sum_{k=0}^{n-1} \frac{k!(n-1-k)!}{n!} \cdot \bar{q}^k.
\]

This turns the Shapley value into a weighted binomial-type sum. Its computational complexity
can be analyzed in two steps:

{Step 1. Mean-field approximation of $E_{|S|=k}\left[\prod_{j \in S} (1 - p_j)\right]$}

Instead of computing the product over all subsets $S$ of size $k$, we approximate it using the average
$\bar{q}$.

Computing $\bar{q}$ requires a single sum over $n-1$ players $\rightarrow \mathcal{O}(n)$.

{Step 2. Shapley value approximation}
\[
\phi_i(T) \approx p_i \cdot \sum_{k=0}^{n-1} \frac{k!(n-1-k)!}{n!} \cdot \bar{q}^k
\]

The sum runs over $k=0$ to $n-1$, so there are $n$ terms.

Each term involves:
\begin{itemize}
    \item A combinatorial weight $\frac{k!(n-1-k)!}{n!}$, which can be rewritten as $\frac{1}{n\binom{n-1}{k}}$
    \item A power $\bar{q}^k$, which can be computed iteratively to avoid recomputation
\end{itemize}

Complexity:

\begin{itemize}
    \item If computed naively (calculating each power $\bar{q}^k$ separately), it would be $\mathcal{O}(n^2)$
    \item If optimized (using $\bar{q}^k = \bar{q}^{k-1} \cdot \bar{q}$), it reduces to $\mathcal{O}(n)$
\end{itemize}

Overall Complexity:
\begin{itemize}
    \item  $\mathcal{O}(n)$ (due to the mean-field approximation and iterative computation)
\end{itemize}
\end{remark}

The approximation $E_{|S|=k}\left[\prod_{j \in S} (1 - p_j)\right] \approx \bar{q}^k$ is valid for small $k$'s because sampling without replacement introduces stronger dependence.
It deteriorates when $k$ is close to $n-1$ unless $p_j$'s are nearly identical.

Before performing accuracy, we first analyze the computational complexity.

\subsection{Complexity Analysis}

\begin{itemize}
\item LCM Computation: $O(n \log M)$ using Euclidean algorithm
\item Approximation Formula:
\[
\phi_i(\mu) = \frac{m_i}{m} \left(1 - \left(1 - \frac{1}{l}\right)^m\right)
\]
\item Operations: $O(n)$ after computing $m$ and $l$
\end{itemize}

\subsection{Accuracy analysis of our binomial approximation method via an example}
Consider the following rational probabilities
\[
P = \left[0.2,\ 0.5,\ 0.7,\ 0.3,\ 0.1,\ 0.9,\ 0.4\right]
\]
with $l = 10$ and $m = [2,5,7,3,1,9,4]$.

\begin{table}[h]
\centering
\caption{Shapley Value Comparison ($l=10$)}
\begin{tabular}{lrrrr}
\toprule
Player $i$ & Exact $\phi_i$ & Approx $\phi_i$ & Error (\%) & $m_i$ \\
\midrule
1 & 0.0621 & 0.0632 & +1.77\% & 2 \\
2 & 0.1763 & 0.1580 & -10.38\% & 5 \\
3 & 0.2118 & 0.2212 & +4.44\% & 7 \\
4 & 0.0967 & 0.0948 & -1.96\% & 3 \\
5 & 0.0308 & 0.0316 & +2.60\% & 1 \\
6 & 0.2954 & 0.2844 & -3.72\% & 9 \\
7 & 0.1222 & 0.1264 & +3.44\% & 4 \\
\bottomrule
\end{tabular}
\end{table}

In this example, errors come from the following:

\begin{enumerate}
\item Independence Assumption:
\[
\text{Ignores correlations between } k_j^{(i)} \text{ points}
\]

\item Finite-precision Effects:
\[
\text{approximation} \quad \left(1-\frac{1}{l}\right)^m \approx e^{-m/l} \text{ holds (requiring large $l$)}
\]

\item Discretization Error:
\[
\max_i \left|p_i - \frac{m_i}{l}\right| = 0 \text{ in this case}
\]
\end{enumerate}

The relative error is theoretically bounded by
\[
\frac{|\phi_i - \phi_i(\mu)|}{\phi_i} \leq \frac{1}{2}\left(\frac{m}{l} - p_i\right) + O\left(\frac{m^2}{l^2}\right)
\]



For this example, the conclusion is:

\begin{itemize}
\item Achieves $<5\%$ error for most players when $l \geq 10$
\item Outperforms for small $p_i$ (e.g., Player 5)
\item Computational cost remains $O(n \log M)$ for exact rational inputs
\end{itemize}

\subsection{Rigorous error analysis of the approximation}   \label{Rigorous}

Given the exact expression (\ref{Non-Equal}) for $\phi_i(T)$ and the approximation (\ref{Approximation2}),
we now turn to the error analysis.

\subsubsection{Error decomposition}

The total error $\mathcal{E}_i$ consists of

(i) The probability space approximation error $\mathcal{E}_i^{(1)}$ that quantifies the discrepancy between:

\begin{itemize}
    \item The true probability $p_i$ (e.g., player $i$'s contribution probability), and
    \item Its approximation $1 - \left(1 - \frac{1}{l}\right)^{m_i}$, where $l$ and $m_i$ are parameters (e.g., sample sizes or combinatorial counts).
\end{itemize}

By applying the second order Taylor approximation, we have
\[
\mathcal{E}_i^{(1)} = \left| p_i - \left(1 - \left(1 - \frac{1}{l}\right)^{m_i}\right) \right|
\]

(ii) The Shapley value estimation error:
\[
\mathcal{E}_i^{(2)} = \left| \frac{1 - (1 - p_i)(1 - \bar{p})^{n-1}}{n} - \frac{m_i}{m}\left(1 - \left(1 - \frac{1}{l}\right)^m\right) \right|
\]
Here, $\bar{p} = \frac{1}{n-1} \sum_{j \neq i} p_j.$
The first term in the expression of $\mathcal {E}_i^{(2)}$
represents the expected marginal contribution of player $i$ in a simplified setting (assuming independent trials and all other $p_j$'s ($j \neq i$) are equal). The second term is our empirical estimation.
In fact,
if $p_j$'s ($j \neq i$) are all equal, let $\bar{p}$ denote this common probability. Then
the exact expression of the Shapley value for the $i$th player is
\[
\phi_i(T) = \frac{1 - (1 - p_i)(1 - \bar{p})^{n-1}}{n}.
\]
This  can be derived from the formula for the Shapley value in this specific context.

The term $1 - (1 - p_i)(1 - \bar{p})^{n-1}$ represents the probability that player $i$ is pivotal in some sense, e.g., the probability that at least one player including $i$ contributes, minus the probability that others contribute without $i$.

\begin{remark}
On one hand, we say that $p_i$ is the probability that player $i$ contributes to a coalition, and on the other hand we know $p_i = \mathbb{P}(X \ni i)$. There is no difference between these two descriptions.
$p_i$ is the marginal probability that player $i$ is part of a randomly formed coalition.
For example, if coalitions form randomly, $p_i$ could represent how likely $i$ is to participate in any given group:  $p_i = \mathbb{P}(X \ni i)$ represents the probability that a random coalition $X$ contains $i$.
This is a formal measure-theoretic way of stating the same idea.
They both describe the same quantity, but the first is intuitive, while the second is rigorous.

Regarding Coalition Formation Model:
If coalitions form independently (each player $i$ joins with probability $p_i$), then we have
$\mathbb{P}(X \ni i) = p_i, \quad \mathbb{P}(X \supseteq S) = \prod_{j \in S} p_j$.
This matches the first description.
The term $(1 - p_i)(1 - \bar{p})^{n-1}$ in the exact expression arises because:

    1) $(1 - p_i)$ = Probability $i$ does not contribute.

    2) $(1 - \bar{p})^{n-1}$ $\approx$ Probability no other player contributes (assuming symmetry).

If coalition formation is not independent (e.g., players join/leave based on others), then $\mathbb{P}(X \ni i)$ could depend on correlations.
In our setting, the descriptions are equivalent because independence is implicitly assumed.
\end{remark}

\subsubsection{Main error bound}
In this section, we analyze the approximation error under the assumption that the $p_j$'s ($j \neq i$) are relatively small compared to $p_i$. For the case where all $p_j$'s (including $p_i$) are approximately equal, we refer to the previous result (\ref{Result}).

\begin{theorem}   \label{MainTheorem}
Suppose $p_j$'s ($j \neq i$) are relatively small compared to $p_i$. Then the approximation error satisfies:
\[
|\phi_i(T) - \phi_i(\mu)| \leq \underbrace{\frac{m_im}{2l^2}}_{\text{Binomial approximation}} + \underbrace{\left|\frac{m_i}{m} - \frac{w_i}{W}\right|}_{\text{Weight discrepancy, vanishing}} + O\left(\frac{m_im^2}{l^3}\right)
\]
where
\begin{itemize}
    \item $w_i = m_i/l = p_i$, $W = \sum_{i=1}^n w_i$, and $m = \sum_{i=1}^n m_i$,
    \item $\phi_i(T)$ is the exact Shapley value for player $i$,
    \item $\phi_i(\mu) = \frac{m_i}{m}\left(1 - \left(1 - \frac{1}{l}\right)^m\right)$ is the binomial approximation.
\end{itemize}
\end{theorem}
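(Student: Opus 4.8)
The plan is to expand both $\phi_i(\mu)$ and the exact value $\phi_i(T)$ to second order in the small quantities $1/l$ and $\{p_j\}_{j\ne i}$, match leading terms, and collect the remainders; the three summands in the stated bound are then, in order, the second-order remainder of a binomial expansion of $\phi_i(\mu)$, an identically vanishing normalization mismatch, and the pooled cubic-and-higher remainders. I would dispatch the middle term first: since $w_i=m_i/l=p_i$ and $W=\sum_{j=1}^n w_j=\tfrac1l\sum_{j=1}^n m_j=m/l$, one has $w_i/W=m_i/m$ exactly, so $\bigl|\tfrac{m_i}{m}-\tfrac{w_i}{W}\bigr|=0$; this term is kept only to parallel the error decomposition of Section~\ref{Rigorous}.

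Next I expand $\phi_i(\mu)=\tfrac{m_i}{m}\bigl(1-(1-\tfrac1l)^m\bigr)$ by the Binomial Theorem,
\[\Bigl(1-\tfrac1l\Bigr)^m=1-\frac{m}{l}+\frac{m(m-1)}{2l^2}-\sum_{k\ge3}(-1)^k\binom{m}{k}l^{-k},\]
bounding the tail in absolute value by $O(m^3/l^3)$ (legitimate when $m=O(l)$, and absorbed into the $O$-term in the regime $m\lesssim l^2$ of Remark~\ref{vsMean-Field}). Multiplying by $\tfrac{m_i}{m}$ and using $m_i/l=p_i$ yields $\phi_i(\mu)=p_i-\tfrac{m_i(m-1)}{2l^2}+O\!\bigl(\tfrac{m_i m^2}{l^3}\bigr)$.

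For $\phi_i(T)=p_i\sum_{S\subseteq E\setminus\{i\}}\tfrac{s!(n-1-s)!}{n!}\prod_{j\in S}(1-p_j)$ I would use the two-sided Bonferroni estimate $0\le \sum_{j\in S}p_j-\bigl(1-\prod_{j\in S}(1-p_j)\bigr)\le\tfrac12\bigl(\sum_{j\in S}p_j\bigr)^2$, i.e. $\prod_{j\in S}(1-p_j)=1-\sum_{j\in S}p_j+\theta_S$ with $0\le\theta_S\le\tfrac12(\sum_{j\ne i}p_j)^2$, together with the elementary identities $\sum_{S\subseteq E\setminus\{i\}}\tfrac{s!(n-1-s)!}{n!}=\sum_{s=0}^{n-1}\tfrac1n=1$ and, for each fixed $j\ne i$,
\[\sum_{\substack{S\subseteq E\setminus\{i\}\\ S\ni j}}\frac{s!(n-1-s)!}{n!}=\sum_{s=1}^{n-1}\binom{n-2}{s-1}\frac{s!(n-1-s)!}{n!}=\sum_{s=1}^{n-1}\frac{s}{n(n-1)}=\frac12.\]
Weighting $\sum_{j\in S}p_j$ by $\tfrac{s!(n-1-s)!}{n!}$ thus contributes $\tfrac12\sum_{j\ne i}p_j$, so $\phi_i(T)=p_i-\tfrac{p_i}{2}\sum_{j\ne i}p_j+R_i$ with $0\le R_i\le\tfrac{p_i}{2}(\sum_{j\ne i}p_j)^2$. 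Substituting $p_i=m_i/l$ and $\sum_{j\ne i}p_j=(m-m_i)/l$ turns this into $\phi_i(T)=p_i-\tfrac{m_i(m-m_i)}{2l^2}+O\!\bigl(\tfrac{m_i m^2}{l^3}\bigr)$. Subtracting the two expansions,
\[\phi_i(T)-\phi_i(\mu)=\frac{m_i(m-1)-m_i(m-m_i)}{2l^2}+O\!\Bigl(\frac{m_i m^2}{l^3}\Bigr)=\frac{m_i(m_i-1)}{2l^2}+O\!\Bigl(\frac{m_i m^2}{l^3}\Bigr),\]
and since $0\le m_i-1<m$ we get $|\phi_i(T)-\phi_i(\mu)|\le\tfrac{m_i m}{2l^2}+O(m_i m^2/l^3)$, which is the claimed inequality (the weight term being $0$).

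The main obstacle is not any single step but the remainder bookkeeping: one must check that the full inclusion-exclusion tail $\sum_{k\ge2}(-1)^{k+1}\sum_{S}\tfrac{s!(n-1-s)!}{n!}\,(\text{degree-}k\text{ elementary symmetric terms})$ is genuinely of size $p_i(\sum_{j\ne i}p_j)^2$ with a harmless combinatorial constant, rather than carrying an $O(n)$-type blow-up from the reciprocal binomial weights. This is exactly where the hypothesis that the $p_j$ ($j\ne i$) are small relative to $p_i$ does its work — it keeps $\sum_{j\ne i}p_j$ (equivalently $m-m_i$) controlled so the dropped terms are truly subdominant — and the parallel regime condition $m\lesssim l^2$ from Remark~\ref{vsMean-Field} controls the binomial tail in the expansion of $\phi_i(\mu)$. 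Once that accounting is pinned down, the bound follows at once from the two exact combinatorial identities above.
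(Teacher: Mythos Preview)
Your approach is essentially the same as the paper's: both proofs Taylor-expand $\phi_i(\mu)$ via the Binomial Theorem to get $p_i-\tfrac{m_i(m-1)}{2l^2}+O(\cdot)$, expand $\phi_i(T)$ through the symmetric-sum/inclusion-exclusion of $\prod_{j\in S}(1-p_j)$ using the identity that the Shapley weight over coalitions containing a fixed $j$ equals $\tfrac12$, and then subtract to isolate $\tfrac{m_i(m_i-1)}{2l^2}$ at second order. Your version is in fact a bit more explicit than the paper's---you prove the $\tfrac12$ identity directly and control the tail via Bonferroni rather than leaving the higher symmetric sums as ``$+\cdots$''---but the architecture and the key computations coincide.
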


\begin{proof}
We compare $\phi_i(\mu)$ with $\phi_i(T)$ through Taylor expansions and symmetric expansions.

\bigskip
\noindent\textbf{Step 1: Exact Expression for $\phi_i(T)$}\\
Using the weighted Shapley formula with $p_j = m_j/l$:
\begin{align*}
\phi_i(T) &= \sum_{S \subseteq E \setminus\{i\}} \frac{s!(n-1-s)!}{n!} \cdot p_i \cdot \prod_{j\in S}(1-p_j) \\
&= \frac{m_i}{l} \sum_{S \subseteq E \setminus\{i\}} \frac{s!(n-1-s)!}{n!} \left(1 - \sum_{j\in S}\frac{m_j}{l} + \sum_{j,k\in S, j < k}\frac{m_jm_k}{l^2} + \cdots  \right) \\
&= \frac{m_i}{l} \left(1 -\frac{1}{2} \sum_{j \in E \setminus\{i\}} \frac{m_j}{l}  + \frac{1}{n(n-1)(n-2)}
\sum_{j,k\in S, j < k} \frac{m_j m_k}{l^2}  + \cdots \right) \\
&= \frac{m_i}{l} - \frac{m_i}{2l^2} (m - m_i) + \frac{1}{n(n-1)(n-2)}
\sum_{j,k\in S, j < k} p_jp_k  + \cdots
\end{align*}

\bigskip
\noindent\textbf{Step 2: Taylor Expansion of $\phi_i(\mu)$}\\
The empirical estimate expands as:
\begin{align*}
\phi_i(\mu) &= \frac{m_i}{m}\left(1 - \left(1-\frac{1}{l}\right)^m\right) \\
&= \frac{m_i}{m}\left(\frac{m}{l} - \frac{m(m-1)}{2l^2} + O\left(\frac{m^3}{l^3}\right)\right) \\
&= \frac{m_i}{l} - \frac{m_i(m-1)}{2l^2} + O\left(p_i \left(\sum_{j=1}^n p_j\right)^2\right)
\end{align*}

\bigskip
\noindent\textbf{Step 3: Error Decomposition}\\
The difference yields:
\begin{align*}
\phi_i(\mu) - \phi_i(T) &= \frac{m_i (1-m_i)}{2l^2} + O\left(p_i \left(\sum_{j=1}^n p_j\right)^2\right)
- \frac{1}{n(n-1)(n-2)} \sum_{j,k\in S, j < k} p_jp_k  - \cdots \\
&= \frac{1}{2} p_i \left(\frac{1}{l}-p_i\right) + O\left(p_i \left(\sum_{j=1}^n p_j\right)^2\right)
- \frac{1}{n(n-1)(n-2)} \sum_{j,k\in S, j < k} p_jp_k  - \cdots
\end{align*}

\bigskip
\noindent\textbf{Dominant Error Terms}:
\begin{enumerate}
\item \textbf{Weight Discrepancy}: $\left|\frac{m_i}{m} - \frac{w_i}{W}\right|$ from sampling imbalance, vanishing
\item \textbf{Binomial Approximation}: $\frac{m_im}{2l^2}$ from second-order Taylor terms and symmetric expansion respectively
\item \textbf{Residual}: $O(\frac{m_im^2}{l^3})$ from higher-order expansions
\end{enumerate}
Weight Discrepancy refers to the mismatch between two weighting schemes in a given system. In this study, it specifically measures the difference between
the empirical sampling ratio $\frac{m_i}{m}$ (the fraction of times element $i$ appears in the sampled data) and the theoretical weight ratio $\frac{w_i}{W}$ (the expected fraction based on predefined weights $w_i$ and their sum $W = \sum w_i$). The vanishing weight discrepancy between $\phi_i(\mu)$ and $\phi_i(T)$ is easy to check.
\end{proof}

By applying  Theorem \ref{MainTheorem} and using its proof techniques, we obtain the following useful results
by breaking down to three cases based on \( \frac{m}{l} \):

Assume $n$ is large, and therefore so is $m$. Let \( r = \frac{m}{l} \). We compare \( \phi_i(\mu) \) and \( \phi_i(T) \) in three different regimes:

\subsection*{Situation 1: \( r = \frac{m}{l} \ll 1 \), \text { i.e. } \( \sum_{j = 1}^n p_j \ll 1\)}

\begin{itemize}
    \item Since \( \frac{1}{l} \) is small and \( m \) is not large, we approximate $\phi_i(\mu)$ as
    \[
    \left(1 - \frac{1}{l}\right)^m \approx 1 - \frac{m}{l} \quad \Rightarrow \quad \phi_i(\mu)
    = m_i \times \frac{1}{m} \left(1-(1 - \frac{1}{l})^m\right) \approx \frac{m_i}{l} = p_i.
    \]
    \item For \( \phi_i(T) \), we approximate the product term to first order to get
    \[
    \phi_i(T) \approx p_i \cdot \left(1 - \frac{m - m_i}{2l} \right) \approx p_i (\text{as } \frac{m}{l} \ll 1).
    \]
    \item \textbf{Conclusion:} The estimation is accurate.
\end{itemize}

\subsection*{Situation 2: \( r = \frac{m}{l} \approx 1 \) , \text { i.e. } \( \sum_{j = 1}^n p_j \approx 1\)}

\begin{itemize}
    \item Then \( \left(1 - \frac{1}{l} \right)^m \approx e^{-1} \) (since $m$ is large and $\frac{m}{l} \approx 1 \Rightarrow m \approx l$), so
    \[
    \phi_i(\mu) \approx \frac{m_i}{m} \left(1 - e^{-1} \right) \approx \frac{m_i}{l} \left(1 - e^{-1} \right) = p_i \left(1 - e^{-1} \right) \approx p_i \cdot 0.632.
    \]
    \item For \( \phi_i(T) \), combinatorial effects in the product \( \prod_{j \in S} (1 - p_j) \) become significant:
    \[
    \phi_i(T) \approx p_i \cdot \left(1 - \frac{m - m_i}{2l} \right) \approx p_i \cdot \left(1-\frac{1}{2}+\frac{m_i}{2l} \right) = p_i \cdot \left(0.5 + 0.5 p_i \right).
    \]
    \item \textbf{Relation:} $\phi_i(T) \approx \frac{\phi_i(\mu)}{1-e^{-1}}$
         $\left(0.5 + 0.5\frac{\phi_i(\mu)}{1-e^{-1}}\right)$.
    \item \textbf{Conclusion:} The approximation is quantitatively reasonable if $p_i$ lies around $1 - \frac{2}{e} (\approx 0.26)$, but quantitative errors grow when $p_i$ gets smaller or larger.
        This error in estimation can be reduced by using the above relation.
\end{itemize}

\subsection*{Situation 3: \( r = \frac{m}{l} \gg 1 \) , \text { i.e. } \( \sum_{j = 1}^n p_j \gg 1\)}

\begin{itemize}
    \item Then \( \left(1 - \frac{1}{l} \right)^m \approx e^{- \frac{m}{n}} \to 0 \), so
    \[
    \phi_i(\mu) \approx \frac{m_i}{m}, \text{ which can be much smaller than } p_i = \frac{m_i}{l} \text{ depending on the ratio } \frac{m}{l}.
    \]
    \item (Example: $m = 2l$. Then $\phi_i(\mu) \approx \left(1-\frac{1}{e^2}\right) p_i$)
    \item For \( \phi_i(T) \), the product \( \prod_{j \in S} (1 - p_j) \) becomes very small due to congestion, so marginal contributions shrink:
    \[
    \phi_i(T) \ll p_i. \text{    (See Remark \ref{Detail} below)}
    \]
    \item \textbf{Conclusion:} Approximation tends to \textit{overestimate} the true Shapley value.
\end{itemize}

\begin{remark} \label{Detail}
\hspace*{2.0in}
\subsection*{Analysis:}
\begin{itemize}
    \item Dominant contributions occur when $\prod_{j \in S} (1 - p_j)$ is not too small:
    \begin{itemize}
        \item This happens when $S$ includes only players $j$ with $p_j \ll 1$ (so $1 - p_j \approx 1$).
        \item If $S$ includes even one $j$ with $p_j \approx 1$, then $\prod_{j \in S} (1 - p_j) \approx 0$, killing the term.
    \end{itemize}

    \item When $\sum p_i \gg 1$, most coalitions $S$ include at least one high-probability player:
    \begin{itemize}
        \item For such $S$, $\prod_{j \in S} (1 - p_j) \approx 0$, so these terms contribute negligibly.
        \item The only surviving terms are those where $S$ consists entirely of players with $p_j \ll 1$.
    \end{itemize}

    \item Let $L = \{j : p_j \approx 1\}$ (the "high-probability players") and $H = \{j : p_j \ll 1\}$ (the "low-probability players"):
    \begin{itemize}
        \item The non-negligible terms are subsets $S \subseteq H \setminus \{i\}$ (i.e., avoiding high-probability players).
        \item For these $S$, $\prod_{j \in S} (1 - p_j) \approx 1$ (since $p_j \ll 1$ for $j \in S$).
    \end{itemize}
\end{itemize}

\subsection*{Approximation:}
\begin{itemize}
    \item If $i \in H$ (i.e., $p_i \ll 1$):
    \[
    \phi_i(T) \approx \sum_{S \subseteq H \setminus \{i\}} \frac{s! (n-1-s)!}{n!} p_i.
    \]
    This simplifies to
    \[
    \phi_i(T) \approx p_i \cdot \sum_{s=0}^{|H|-1} \frac{s! (n-1-s)!}{n!} \binom{|H|-1}{s}.
    \]
    The combinatorial sum evaluates to $\frac{1}{n}$ (since the Shapley weights sum to 1 over all permutations), so
    \[
    \phi_i(T) \approx \frac{p_i}{n}.
    \]

    \item If $i \in L$ (i.e., $p_i \approx 1$):
    \begin{itemize}
        \item The marginal contribution of $i$ is significant only when $i$ is pivotal, which happens with probability $\approx 1$ (since most coalitions without $i$ fail).
        \item Thus:
        \[
        \phi_i(T) \approx 1 - \text{(probability that others already succeed without } i\text{)}.
        \]
        \item But since $\sum p_i \gg 1$, the probability that others succeed without $i$ is high, so $\phi_i(T)$ could be small. However, a precise approximation requires more care (e.g., Poisson-like analysis).
    \end{itemize}
\end{itemize}

\subsection*{Estimation:}
\begin{itemize}
    \item For low-probability players ($p_i \ll 1$):
    \[
    \phi_i(T) \approx \frac{p_i}{n}.
    \]
    \item For high-probability players ($p_i \approx 1$):
    \begin{itemize}
        \item Their Shapley value depends on how critical they are to the coalition. If they are redundant (due to other high-probability players), $\phi_i(T)$ could be small.
        \item A rough estimate might be $\phi_i(T) \approx \frac{1}{|L|}$ (where $|L|$ is the number of high-probability players), but this depends on the exact distribution of $p_j$'s.
    \end{itemize}
\end{itemize}

Hence, if $p_i \ll 1$ (low-probability player), $\phi_i(T) \approx \frac{p_i}{n}$.
    If $p_i \approx 1$ (high-probability player), $\phi_i(T)$ depends on redundancy but is typically $O(1/|L|)$, where $|L|$ is the number of high-probability players.
This aligns with intuition: low-probability players contribute marginally, while high-probability players compete for pivotal roles.
   These Shapley values need to be proportionally adjusted so that the total Shapley value is equal to $T(E)$.
\end{remark}

\bigskip
\noindent
The above three situations are summarized below:
\bigskip

\begin{center}
\begin{tabular}{@{}llll@{}}
\textbf{Case} & \textbf{Regime} & \(\phi_i(\mu)\) vs \(\phi_i(T)\) & \textbf{Notes} \\
\bigskip
1 & \(m/l \ll 1\) & \(\phi_i(\mu) \approx \phi_i(T)\) & Accurate approximation \\
2 & \(m/l \approx 1\) & \(\phi_i(\mu) \approx \phi_i(T)\) & Accurate approximation when $p_i$ is near $1-\frac{2}{e}$,\\
  &  &  $\phi_i(T) \approx \frac{\phi_i(\mu)}{1-e^{-1}}$
         $\left(0.5 + 0.5\frac{\phi_i(\mu)}{1-e^{-1}}\right)$                                        &Otherwise apply the stated $\lq\lq$Relation"\\
3 & \(m/l \gg 1\) & \(\phi_i(\mu) > \phi_i(T)\) & Overestimation due to overlap, \\ 
     & & & \hspace*{0.1 in} Resolution: Remark \ref{Detail} \\
\end{tabular}
\end{center}

\bigskip

{\bf Case 2: $p_i$'s are arbitrary real numbers.}
First we choose rational numbers $r_i, 1 \leq i \leq n$ satisfying $|q_i - p_i| < \delta$ where $\delta$
will be determined based on the need of the given application.
 With a straightforward analysis, the difference in the Shapley value \(\phi_i(T)\) when each \(p_i\) is replaced by \(q_i\) is given by

 \[
|\tilde{\phi}_i(T) - \phi_i(T)| \leq \delta \cdot \frac{n+1}{2}.
\]

Clearly, for any $\epsilon > 0$, it holds that $|\tilde{\phi}_i(T) - \phi_i(T)| < \epsilon$ whenever
$\delta < \epsilon \times \frac{2}{n+1}$.

Another algorithm is designed for arbitrary real numbers $p_i$'s, and it is provided in Section 6.

\bigskip
\section{Examples and application}
In this section, we examine the validation and applications of the proposed approximation.
\subsection{Examples}

\subsection*{Example 1: Symmetric case}

Parameters: $n=6, p_i = \frac{1}{2}.$

\begin{itemize}
    \item $p_i = \frac{1}{2}$, so $l=2$, $m_i=1$
    \item $m = \sum_{i=1}^6 m_i = 6$
\end{itemize}

{\bf Approximation:}
$$
\phi_i(\mu) = \frac{1}{6}\left(1 - \left(1 - \frac{1}{2}\right)^6\right)
= \frac{1}{6}\left(1 - \frac{1}{64}\right)
= \frac{63}{384} = \frac{21}{128}
$$

{\bf Exact Value:}
\[
\phi_i(T) = \frac{1 - (1/2)^6}{6} = \frac{63}{384} = \frac{21}{128}
\]

Another example: Suppose $n=6$, $p_i = \frac{3}{5}$, so $l=5$, $m_i=3$,
$m = \sum_{i=1}^6 m_i = 18$. The final answers are
\[
\phi_i(T) = \frac{5187}{31250} \approx 0.1660, \quad \phi_i(\mu) = \frac{1248659262963}{7629394531250} \approx 0.1637.
\]
Both results are exact and consistent with the theoretical framework. The fractional forms highlight the combinatorial nature of the calculations, which is essential for precision in network reliability analysis.

\subsection*{Example 2: General case (asymmetric, non-uniform)}  \label{Example2}
Consider three players with $p_1 = \frac{3}{6}, p_2 = \frac{2}{6}, p_3 = \frac{1}{6}$. Here,
The LCD is $l=6$. The sub-players' information is

    \begin{itemize}
        \item Player 1: $m_1=3$ sub-players
        \item Player 2: $m_2=2$ sub-players
        \item Player 3: $m_3=1$ sub-player
    \end{itemize}
The total sub-players is $m = \sum_{i=1}^3 m_i = 6$.

The Shapley value approximation for player $i$ is:
\[
\phi_i(\mu) = \frac{m_i}{m}\left(1 - \left(1 - \frac{1}{l}\right)^m\right)
\]
Substituting $l=6$ and $m=6$ gives
\[
\phi_i(\mu) = \frac{m_i}{6}\left(1 - \left(1 - \frac{1}{6}\right)^6\right) = \frac{m_i}{6}\left(1 - \left(\frac{5}{6}\right)^6\right)
\]

\[
 = \frac{m_i}{6}(1 - 0.3349) = \frac{m_i}{6} \times 0.6651.
\]
This results

\begin{itemize}
    \item For Player 1 ($m_1=3$):
    $
    \phi_1(\mu) = \frac{3}{6} \times 0.6651 \approx 0.3326
    $

    \item For Player 2 ($m_2=2$):
    $
    \phi_2(\mu) = \frac{2}{6} \times 0.6651 \approx 0.2217
    $

    \item For Player 3 ($m_3=1$):
    $
    \phi_3(\mu) = \frac{1}{6} \times 0.6651 \approx 0.1108
    $
\end{itemize}

To compare, the exact Shapley values (computed numerically) are approximately:
$$
\phi_1(T) \approx 0.3275, \;
\phi_2(T) \approx 0.2475, \;
\phi_3(T) \approx 0.1080.
$$

The results are summarized in the below table.

\begin{center}
\begin{tabular}{|c|c|c|c|c|}
\hline
Player $i$ & $m_i$ & Approximation $\phi_i(\mu)$ & Exact $\phi_i(T)$ & Relative Error \\
\hline
1 & 3 & 0.3326 & 0.3275 & 1.6\% \\
2 & 2 & 0.2217 & 0.2475 & 10.4\% \\
3 & 1 & 0.1108 & 0.1080 & 2.6\% \\
\hline
\end{tabular}
\end{center}

\subsection{Application}
We will use the data from Example 2 to illustrate the use of the proposed approximation method.

{\bf 1. Problem Setup}

Network devices (players):

\begin{itemize}
    \item Device 1: Web server ($p_1 = \frac{1}{2}$, high risk)
    \item Device 2: Database server ($p_2 = \frac{1}{3}$, medium risk)
    \item Device 3: IoT sensor ($p_3 = \frac{1}{6}$, low risk)
\end{itemize}

Sub-players: Represent vulnerabilities (e.g., $m_1 = 3$ means 3 known common vulnerabilities and exposures (CVEs) on Device 1)

LCD ($l=6$): Base failure probability unit (e.g., $1/6 = 16.7\%$ risk per vulnerability)

{\bf 2. Approximation Implementation}

Apply our formula for $\phi_i(\mu)$,
\[
\phi_i(\mu) = \frac{m_i}{6}\left(1 - \left(\frac{5}{6}\right)^6\right) \approx \frac{m_i}{6} \times 0.6651
\]

\begin{itemize}
    \item Device 1 (Web Server): $\phi_1 \approx 0.3326 \quad (33.26\% \text{ systemic risk impact})$

    \item Device 2 (Database): $\phi_2 \approx 0.2217 \quad (22.17\% \text{ impact})$

    \item Device 3 (IoT): $\phi_3 \approx 0.1108 \quad (11.08\% \text{ impact})$
\end{itemize}

{\bf 3-1. Security Actions}

Priority patching:
Device 1's 3 vulnerabilities contribute 50\% more risk than Device 2's 2 vulnerabilities (0.3326 vs 0.2217), justifying immediate patching.

{\bf 3-2. Access Control:}
\begin{itemize}
    \item Restrict Device 1's inbound connections (highest $\phi_i$).
    \item Allow Device 3 limited access (lowest $\phi_i$).
\end{itemize}

{\bf 3-3. Verification:}
Comparing with the exact Shapley values that are also given in the previous table,
$\phi_1 = 0.3275, \phi_2 = 0.2475, \phi_3 = 0.1080,$
and using the formula below for error calculation,

\[
\text{Error} = \left( \frac{\phi_i^{\text{approx}} - \phi_i^{\text{exact}}}{\phi_i^{\text{exact}}} \right) \times 100\%
\]
we have
\begin{itemize}
    \item Device 1:
    $\left( \frac{0.3326 - 0.3275}{0.3275} \right) \times 100\% \approx +1.56\%$

    \item Device 2: $\left( \frac{0.2217 - 0.2475}{0.2475} \right) \times 100\% \approx -10.4\%$

    \item Device 3: $\left( \frac{0.1108 - 0.1080}{0.1080} \right) \times 100\% \approx +2.6\%$
\end{itemize}
Acceptable for rapid risk assessment.

{\bf 4. Dynamic response}
New CVE discovered on Device 2 $\rightarrow$ $m_2$ increases from 2 to 3:
\[
\phi_2^{\text{new}} = \frac{3}{6} \times 0.6651 \approx 0.3326 \quad \text{(now equal to Device 1)}
\]
Triggers automatic security policy update.

{\bf 5. Benefits}
\begin{itemize}
    \item Efficiency: Computes in $O(1)$ per device vs. $O(2^n)$ exact calculation.
    \item Interpretability: Risk scores directly proportional to vulnerability counts ($m_i$).
    \item Adaptability: Live updates as $m_i$ changes (e.g., new vulnerabilities detected).
\end{itemize}

{\bf 6. Limitations \& Mitigations}
\begin{itemize}
    \item Assumption: Vulnerabilities are independent. \\
    Mitigation: Introduce correlation factors for shared exploits.

    \item Approximation error: \\
    Mitigation: Use exact values for critical devices when feasible.
\end{itemize}

\[
\boxed{
\phi_i(\mu) = \frac{m_i}{m}\left(1 - \left(1 - \frac{1}{l}\right)^m\right) \quad \text{(RACS approximation)}
}
\]

This demonstrates how our approximation enables real-time, risk-proportional security decisions for asymmetric networks.

\bigskip

{\bf Summary and Conclusion}

\begin{itemize}
    \item Intuition: The RACS construction treats each sub-player equally, which works well for symmetric or near-symmetric cases.
    \item Accuracy: The approximation is exact for symmetric players and accurate for large $l$.
    \item Intuition: The RACS construction treats each player as a group of sub-players, preserving marginal contributions.
    \item Limitation: For small $l$ and highly asymmetric $p_i$, use the exact formula.
    \item Limitation: For highly asymmetric $p_i$, the exact formula is preferred.
\end{itemize}
The RACS-based approximation is theoretically justified and matches the exact value in symmetric cases. For asymmetric scenarios, it remains a robust heuristic.
The approximation is reasonable but less precise for asymmetric cases. Use exact methods for critical applications.

\section{Computational complexity comparison: $\phi_i(\mu)$ vs $\phi_i(T)$}  \label{Sec7}

We analyze the two formulas for the Shapley value under probabilistic node failures in networks:

{\bf 1. Formula for $\phi_i(T)$} (Exact Inclusion-Exclusion)
\[
\phi_i(T) = \sum_{S \subseteq E \setminus \{i\}} \frac{s!(n-1-s)!}{n!} \cdot p_i \cdot \prod_{j \in S} (1 - p_j)
\]

\textbf{Complexity:}
\begin{itemize}
    \item Subset Enumeration: Requires iterating over all $2^{n-1}$ subsets of $E \setminus \{i\}$ (exponential in $n$).
    \item Per-Subset Operations:
    \begin{itemize}
        \item Compute $\prod_{j \in S} (1 - p_j)$ (linear in $s$ per subset)
        \item Factorial terms $|S|!$ can be precomputed in $O(n)$ time
    \end{itemize}
    \item Total Time: $O(2^{n-1} \cdot n)$ (intractable for $n \gg 20$)
\end{itemize}

\textbf{Space}: $O(1)$ (no additional storage beyond input)

\bigskip

{\bf 2. Formula for $\phi_i(\mu)$} (RACS Approximation)
\[
\phi_i(\mu) = m_i \times \frac{1}{m} \left(1 - \left(1 - \frac{1}{l}\right)^m\right)
\]

\textbf{Complexity}:
\begin{itemize}
    \item Precomputation:
    \begin{itemize}
        \item Compute $m = \sum_{i=1}^n m_i$ (linear in $n$)
        \item Compute $\left(1 - \frac{1}{l}\right)^m$ (using exponentiation by squaring: $O(\log m)$)
    \end{itemize}
    \item Per-Node Operations:
    \begin{itemize}
        \item Evaluate the closed-form expression (constant time per node)
    \end{itemize}
    \item Total Time: $O(n + \log m)$ (efficient even for large $n$)
\end{itemize}

\textbf{Space}: $O(1)$ (only stores $m$, $l$, and $m_i$)

\bigskip


\begin{table}[h]
\centering
\caption{Complexity Comparison of $\phi_i(T)$ and $\phi_i(\mu)$}
\begin{tabular}{|l|c|c|}
\hline
\textbf{Metric} & $\phi_i(T)$ (Exact) & $\phi_i(\mu)$ (RACS) \\ \hline
Time Complexity & $O(2^{n-1} \cdot n)$ & $O(n + \log m)$ \\ \hline
Space Complexity & $O(1)$ & $O(1)$ \\ \hline
Feasibility for $n$ & Intractable ($n > 20$) & Scalable ($n \geq 10^6$) \\ \hline
Accuracy & Exact & Approximate (error $\approx O(1/l)$) \\ \hline
Use Case & Small networks & Large-scale systems \\ \hline
\end{tabular}
\end{table}

$\phi_i(T)$ sums over all subsets, exploiting exact inclusion-exclusion, while $\phi_i(\mu)$ leverages probabilistic independence in the RACS model, reducing the problem to algebraic operations.

\subsection*{Example with $n=6$}
\begin{itemize}
    \item $\phi_i(T)$: Sums $2^5 = 32$ subsets
    \item $\phi_i(\mu)$: Computes $m = 18$, then evaluates one exponential term
\end{itemize}

For $n=100$:
\begin{itemize}
    \item $\phi_i(T)$: $2^{99} \approx 6.3 \times 10^{29}$ subsets (infeasible)
    \item $\phi_i(\mu)$: Computes $m$ (e.g., $m = 300$) and $\left(1 - \frac{1}{l}\right)^m$ in microseconds
\end{itemize}

Hence, it is suggested to use $\phi_i(\mu)$ for large networks where exact computation is prohibitive, and $\phi_i(T)$ for small $n$ when precision is critical. The RACS model's efficiency comes at a minor cost in accuracy, controlled by $l$ (larger $l$ improves precision).

\bigskip

{\bf 3. Use M\"{o}bius inversion (for general RACS)}

If $X$ is not independent, but we know its M\"{o}bius mass function $m(F) = \mathbb{P}(X = F)$, then:
\[
\mathbb{P}(i \in X \text{ and } X \cap S = \emptyset) = \sum_{\substack{F \subseteq E \setminus S \\ i \in F}} m(F).
\]
\begin{itemize}
    \item \textbf{Complexity:} Still $O(2^n)$ in the worst case, but:
    \begin{itemize}
        \item If $m(F)$ is sparse (many $m(F) = 0$), we can skip computations.
        \item If $X$ is $k$-sparse (at most $k$ elements), we only need to sum over $\binom{n}{k}$ subsets.
    \end{itemize}
\end{itemize}

{\bf 4. Approximate using belief functions}

If $X$ is a belief function (containment function of RACS), then:
\[
\mathbb{P}(i \in X \text{ and } X \cap S = \emptyset) = \text{Bel}(\{i\} \cup S^c) - \text{Bel}(S^c),
\]
where $\text{Bel}(A) = \mathbb{P}(X \subseteq A)$.
\begin{itemize}
    \item If $\text{Bel}(A)$ can be computed efficiently (e.g., via Dempster-Shafer theory), this helps.
\end{itemize}

{\bf 5. Monte Carlo sampling (for large $n$)}

If exact computation is too expensive:
\[
\mathbb{P}(i \in X \text{ and } X \cap S = \emptyset) \approx \frac{\text{Count}(i \in X \text{ and } X \cap S = \emptyset)}{\text{Total samples}}.
\]
\begin{itemize}
    \item \textbf{Complexity:} $O(\text{number of samples})$, which is tractable.
\end{itemize}

{\bf 6. Assume a low-order interaction model}

If $X$ has limited dependencies (e.g., pairwise interactions), we can use:
\begin{itemize}
    \item Graphical models (Markov random fields).
    \item Log-linear models (for small clique sizes).
\end{itemize}

{\bf Summary Table: Best approaches for estimating $\mathbb{P}(i \in X \text{ and } X \cap S = \emptyset)$}
\begin{center}
\begin{tabular}{|l|l|l|}
\hline
\textbf{Case} & \textbf{Method} & \textbf{Complexity} \\
\hline
Independent $X$ & Direct product formula & $O(|S|)$ \\
Sparse $X$ & M\"{o}bius inversion on small subsets & $O(\binom{n}{k})$ \\
Belief function & Use $\text{Bel}(A)$ & Depends on $\text{Bel}$ \\
Large $n$ & Monte Carlo sampling & $O(\text{samples})$ \\
Low-order interactions & Graphical models & Polynomial in $n$ \\
\hline
\end{tabular}
\end{center}

\subsection{Recommendation}
\begin{itemize}
    \item If $X$ has independent elements: Use $p_i \prod_{j \in S} (1 - p_j)$. (Fastest)
    \item If $X$ is sparse: Use M\"bius inversion with pruning. (Exact but expensive)
    \item If $n$ is very large: Use Monte Carlo sampling. (Approximate but scalable)
    \item If $X$ has known structure: Exploit it (e.g., belief functions, graphical models).
\end{itemize}

\section{Second Algorithm}    \label{Second}
Consider $n$ probabilities $p_1, p_2, \dots, p_n$. We now assume $p_1 \leq p_2 \leq \cdots \leq p_n$ (sorting requires $n log (n)$ operations). Note that we do not require $p_i$'s to be rational numbers in this section.

\subsection{The algorithm}
The algorithm proceeds as:
\begin{enumerate}
    \item Find \( p_{\text{min}} = \min \{p_1, \dots, p_n\} \) (actually the first number in the sequence when the sequence is sorted).
    \item Compute differences \( \Delta_i = p_i - p_{\text{min}} \) for all \( i \).
    \item Remove zeros (i.e., discard \( p_{\text{min}} \)) and keep \( \Delta_i \neq 0 \).
    \item Repeat with remaining \( \Delta_i \) until empty.
\end{enumerate}

\begin{theorem}
The process terminates after at most \( n \) iterations for any \( p_i \in \mathbb{R} \).
\end{theorem}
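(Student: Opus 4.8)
The plan is to control the size of the working list and show that it strictly decreases at every iteration. Write $L_1=(p_1,\dots,p_n)$ for the initial list, and let $L_{t+1}$ denote the list produced after the $t$-th iteration; set $k_t=|L_t|$, so $k_1=n$. First I would note that each $L_t$ is a finite (possibly empty) collection of real numbers, so whenever $L_t\neq\emptyset$ the quantity $p_{\min}^{(t)}=\min L_t$ exists and is attained by at least one entry of $L_t$.

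Next I would examine one iteration. Forming $L_{t+1}$ from $L_t$ amounts to replacing every entry $p$ of $L_t$ by $p-p_{\min}^{(t)}$ and then deleting all the resulting zeros. An entry maps to $0$ exactly when $p=p_{\min}^{(t)}$; since the minimum is attained, at least one entry is deleted, and therefore $k_{t+1}\le k_t-1$ for every $t$ with $L_t\neq\emptyset$.

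A one-line induction then finishes the argument: as long as the process is still running at step $t$ we have $k_t\le k_1-(t-1)=n-t+1$, and combining this with $k_t\ge 1$ forces $t\le n$. Equivalently, $L_{n+1}$ must be empty, so the process halts after at most $n$ iterations. When some of the $p_i$ coincide, more than one entry is deleted in the corresponding step and the procedure stops even sooner; if desired one can record the sharper fact that the number of iterations equals the number of distinct values among $p_1,\dots,p_n$.

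There is essentially no technical obstacle here: the single point requiring care is the assertion that each iteration discards at least one element, which could fail only if the minimum of a nonempty finite set of reals were not attained, and that never happens. The substance of the statement is conceptual rather than computational: in contrast with Euclidean-type subtractive schemes, which may fail to terminate on irrational inputs, this ``subtract the current minimum'' procedure removes the minimizing element(s) unconditionally at each step, so termination in at most $n$ rounds holds for arbitrary real $p_i$, with no arithmetic hypotheses on the inputs.
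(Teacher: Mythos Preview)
Your argument is correct and follows essentially the same approach as the paper's own proof: both hinge on the observation that the attained minimum is removed at every step, so at least one element disappears per iteration and the process halts in at most $n$ rounds. Your version is simply a more carefully written rendition of that same idea, with the induction made explicit and the sharper bound (number of distinct $p_i$) noted in passing.
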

\begin{proof}
At each iteration:
\begin{itemize}
    \item The minimum \( p_{\text{min}} \) is permanently removed (as \( \Delta = 0 \) for it).
    \item At least one number is eliminated per iteration.
    \item Worst case: \( n \) iterations (when removing one number per step).
\end{itemize}
Thus, it terminates in \( O(n) \) steps for any real inputs.
\end{proof}

{\bf Example: \( \{1, \sqrt{2}\} \)}
\begin{align*}
\text{Iteration 1:} \quad & p_{\text{min}} = 1, \quad \Delta = \sqrt{2} - 1, \quad \text{Keep } \sqrt{2} - 1. \\
\text{Iteration 2:} \quad & p_{\text{min}} = \sqrt{2} - 1, \quad \Delta = 0, \quad \text{Terminate.}
\end{align*}

{\bf Complexity Analysis}
\begin{itemize}
    \item \textbf{Per Iteration}:
    \begin{itemize}
        \item Find minimum: \( O(k) \) for \( k \) remaining numbers.
        \item Subtract and drop zeros: \( O(k) \).
    \end{itemize}
    \item \textbf{Total Operations}:
    \[
    \sum_{k=n}^1 O(k) = O(n^2).
    \]
\end{itemize}

\boxed{
\text{Time complexity: } O(n^2) \text{ for any } p_i \in \mathbb{R}, \text{ with guaranteed termination.}
}

\bigskip
In our study, $p_i$'s as probabilities are positive, lying between 0 and 1.
\bigskip

{\bf Weight breakdown}

Total weight: $\sum_{i=1}^{n_1} p_i^{(1)}$, where $n_1 = n, p_i^{(1)} = p_i$.

First iteration: Let $r_1 = \min\{p_1^{(1)}, p_2^{(1)}, \cdots, p_{n_1}^{(1)}\}$.
The weight is $n_1 \times r_1$. Subtracting $r_1$ from each number in the current sequence gives non-zero numbers
$p^{(2)}_1, p^{(2)}_2, \cdots, p^{(2)}_{n_2} (n_2 < n_1)$.

Second iteration: Put $r_2 = \min\{p^{(2)}_1, p^{(2)}_2, \cdots, p^{(2)}_{n_2}\}$. The weight is
$n_2 \times r_2$. Subtracting $r_2$ from each number in the current sequence gives non-zero numbers
$p^{(3)}_1, p^{(3)}_2, \cdots, p^{(3)}_{n_3} (n_3 < n_2)$.

Generally, in the $k$th iteration, the sequence of non-zero numbers is
$p^{(k)}_1, p^{(k)}_2, \cdots, p^{(k)}_{n_k} (n_k < n_{k-1})$.
Set $r_k = \min\{p^{(k)}_1, p^{(k)}_2, \cdots, p^{(k)}_{n_k}\}$.
The $k$th iteration weight is $n_k \times r_k$. Subtracting $r_k$ from each number in the current sequence gives non-zero numbers
$p^{(k+1)}_1, p^{(k+1)}_2, \cdots, p^{(k+1)}_{n_{k+1}} (n_{k+1} < n_k)$.

This process iterates at most $n$ times (exactly $n$ times if all the $p_i$'s are distinct), terminating when the new sequence becomes no non-zero numbers.

In the $k$th iteration step, it results a homogeneous case $t^{(k)}_1, t^{(k)}_2, \cdots, t^{(k)}_{n_k}$,
where $t^{(k)}_j = r_k, 1 \leq j \leq n_k$, and the corresponding (accurate) Shapley value is given by the closed form

\begin{equation} \label{Result_2}
\phi_i^{(k)}(T) =  \frac{r_k}{n_k} \sum_{s=0}^{n_k-1} \binom{n_k-1}{s} \frac{(1 - r_k)^s}{\binom{_k-1}{s}} = \frac{r_k}{n_k} \sum_{s=0}^{n_k-1} (1 - r_k)^s = \frac{r_k}{n_k} \frac{1 - (1 - r_k)^{n_k}}{1 - (1 - r_k)} = \frac{1}{n_k} \Big{[}1 - (1 - r_k)^{n_k} \Big{]}.
\end{equation}

Finally, the approximation of $\phi_i(T)$ is given by

\begin{equation} \label{Result_3}
\phi_i(T) \approx \sum_{k=1}^{m_i} n_k \times r_k \times \phi_i^{(k)}(T)
\end{equation} where $m_i = \# \{j | p_j \leq p_i, 1 \leq j \leq n\}$.

This approximation degenerates to the homogeneous formula (\ref{Result}) if
all the $p_i$'s are the same.

In the worst-case iteration, the algorithm runs for at most
$n$ iterations (when one player is removed per iteration). For each iteration,
the work is dominated by $O(n_k^2)$ operations for finding minima, differences, and filtering.
The total number of operations is

$$\sum_{k=1}^n O(n_k) \leq O(n^2), \quad \text{since } \sum_{k=1}^n n_k \leq \sum_{k=1}^n (n-k+1) = \frac{n(n+1)}{2}.$$


The computational complexity of this whole process will be analyzed next.

\subsection{Accuracy Analysis of the Approximation Algorithm vs. Exact Shapley Values}

We analyze the approximation quality both theoretically and through examples, comparing to the exact formula:
\[
\phi_i(T) = \sum_{S \subseteq E \setminus \{i\}} \frac{s! (n-1-s)!}{n!} \cdot p_i \cdot \prod_{j \in S} \left(1 - p_j\right).
\]

{\bf Theoretical Accuracy}

\textbf{Key Observations:}

\begin{enumerate}
    \item \textbf{Exactness for Homogeneous Cases}:
    \begin{itemize}
        \item If all $p_i = p$, the approximation matches the exact Shapley value:
        \[
        \phi_i = \frac{1 - (1-p)^n}{n}
        \]
    \end{itemize}

    \item \textbf{Additive Error for General Cases}:
    \begin{itemize}
        \item The approximation error stems from:
        \begin{itemize}
            \item Ignoring coalition-specific interactions (exact formula considers $2^{n-1}$ subsets)
            \item Linearization of higher-order terms
        \end{itemize}
    \end{itemize}

    \item \textbf{Error Bound}:
    \begin{itemize}
        \item For player $i$, the worst-case relative error:
        \[
        \frac{|\phi_i^{\text{approx}} - \phi_i^{\text{exact}}|}{\phi_i^{\text{exact}}} \leq \frac{(1 - p_i)^{n-1}}{1 - \prod_{j \neq i}(1 - p_j)}
        \]
    \end{itemize}
\end{enumerate}

{\bf Example-Based Analysis}

Let
\[
p_1 = 0.2,\quad p_2 = 0.5,\quad p_3 = 0.7,\quad p_4 = 0.3,\quad p_5 = 0.1,\quad p_6 = 0.9,\quad p_7 = 0.4.
\]

The following table shows a direct comparison between the approximation method and the Exact Shapley
value formula.

\begin{table}[h]
\centering
\caption{Comparison of Exact vs Approximate Shapley Values}
\begin{tabular}{lrrrl}
\toprule
Player $i$ & Exact $\phi_i$ & Approx $\phi_i$ & Relative Error (\%)* & Dominant Error Source \\
\midrule
1 ($p_1 = 0.2$) & 0.062074 & 0.052189 & -15.93\% & Underestimates small coalitions \\
2 ($p_2 = 0.5$) & 0.176342 & 0.174526 & -1.03\% & Robust for mid-range $p_i$ \\
3 ($p_3 = 0.7$) & 0.211763 & 0.201563 & -4.82\% & Misses synergies with other $p_j$ ($j \neq i$) \\
4 ($p_4 = 0.3$) & 0.096693 & 0.099119 & +2.51\% & Robust for mid-range $p_i$ \\
5 ($p_5 = 0.1$) & 0.030815 & 0.008894 & -71.13\% & Severe underestimation for very small $p_i$ \\
6 ($p_6 = 0.9$) & 0.295382 & 0.273576 & -7.38\% & Underestimates dominant players \\
7 ($p_7 = 0.4$) & 0.122157 & 0.140106 & +14.70\% & Overestimates for mid-range $p_i$ \\
\bottomrule
\end{tabular}
\end{table}
{
\footnotesize\textsuperscript{*} Relative Error \% calculated as (Approx$-$Exact)/Exact $\times$ 100
}

Further improvements of the Second Algorithm and the previously established algorithm in Section 3 will
be discussed shortly, under $\lq\lq$Extreme situations: Mathematical intuition with examples."

{
\footnotesize\textsuperscript{}

\subsubsection*{Theoretical Justification}

The approximation error for high $p_i$ can be quantified by analyzing the Taylor expansion of the method.

{\bf Approximation Form}
 The approximate Shapley value linearizes contributions as:
 \begin{equation}
 \phi_i^{\text{approx}} \approx p_i \cdot \left(1 - \sum_{j \neq i} \frac{p_j}{2}\right) + \mathcal{O}(p_i  p_j^2),
 \end{equation}
The approximate Shapley value for player $i$ linearizes contributions (refer to (\ref{Inclusion_exclusion})).

The sum in question involves terms like $\prod_{j \in S} (1 - p_j)$. Using the inclusion-exclusion principle, this expands to a symmetric sum:
\begin{equation}  \label{Inclusion_exclusion}
\prod_{j \in S} (1 - p_j) = \sum_{k=0}^{|S|} (-1)^k S_k = 1 - \sum_{j \in S} p_j + \sum_{\substack{j,k \in S \\ j < k}} p_j p_k - \cdots + (-1)^{|S|} \prod_{j \in S} p_j,
\end{equation}
where for variables $\{p_j\}_{j \in S}$, the $k$-th elementary symmetric sum $S_k$ is defined as:
\[
S_k = \sum_{\substack{T \subseteq S \\ |T| = k}} \prod_{j \in T} p_j.
\]

By plugging in (\ref{Inclusion_exclusion}) in (\ref{Approximation2}) gives
\begin{equation}
\phi_i^{\text{approx}}(T) \approx p_i \cdot \left(1 - \sum_{j \neq i} \frac{p_j}{2}\right) + \mathcal{O}(p_i \bar{p}_{-i}^2),
\end{equation}
where $\bar{p}_{-i} = \frac{1}{n-1}\sum_{j\neq i} p_j$ is the average probability of other players.
The error scales with $p_i$ multiplied by quadratic terms in others' probabilities, mathematically coherent since $\bar{p}_{-i}$ aggregates all $j \neq i$, i.e. $\mathcal{O}(p_i \bar{p}_{-i}^2) \equiv \text{Error from ignoring 2nd-order interactions}$
   $\approx \text{Terms like } p_i p_j p_k \text{ where } j,k \neq i.$

 \bigskip

{\bf Extreme situations: Mathematical intuition with examples}

The approximation error \( \varepsilon_i \) for player \( i \) follows:

\[
\varepsilon_i = \phi_i^{\text{approx}} - \phi_i = \approx
\underbrace{-p_i(1 - p_i)^{n - 1}}_{\text{Isolation term}}
+ \underbrace{\sum_{j \ne i} \frac{p_i p_j}{2}}_{\text{Pairwise overcorrection}}
- \cdots
\]

\bigskip

For $P=[0.05, 0.95, 0.95]$,
the above second algorithm gives the following result (Table 4), with a comparison to the exact
Shapley value.

\begin{table}[h]   \label{Table4}
\centering
\caption{Comparison of Exact and Approximate Shapley Values (using Second Algorithm)}
\label{tab:shapley_comparison}
\begin{tabular}{lrrrl}
\toprule
Agent  & $p_i$ & {Exact $\phi_i(T)$} & {Approx $\phi_i(T)$} & {Relative Error (\%)} \\
\midrule
1 & 0.05 & 0.0176 & 0.0013 & -92.6 \\
2 & 0.95 & 0.4912 & 0.4993 & +1.6 \\
3 & 0.95 & 0.4912 & 0.4993 & +1.6 \\
\bottomrule
\end{tabular}

\smallskip
\footnotesize
Note: Relative error calculated as $\frac{\phi_i(\mu) - \phi_i(T)}{\phi_i(T)} \times 100\%$
\end{table}

Results from the method developed previously in Section 3.5, i.e. $\phi_i(\mu) = m_i/m(1-(1-1/l)^m)$, are provided in Table 5.

\begin{table}[h]      \label{Table5}
\centering
\caption{Comparison of Shapley Value Calculation Methods (using Method in Section 3.5)}
\label{tab:shapley_comparison2}
\begin{tabular}{lrrrl}
\toprule
Agent  & $p_i$ & {Exact $\phi_i(T)$} & {$\phi_i(\mu)$} & {Relative Error (\%)} \\
\midrule
1 & 0.05 & 0.0176 & 0.0222 & +26.1 \\
2 & 0.95 & 0.4912 & 0.4213 & -14.2 \\
3 & 0.95 & 0.4912 & 0.4213 & -14.2 \\
\bottomrule
\end{tabular}
\end{table}

If the method given in Section 3.5, Remark \ref{Detail}, is used,
the comparison result is further improved, as shown in Table 6.

\begin{table}[h]   \label{Table6}
\centering
\caption{Normalized Shapley Value Comparison (using Remark \ref{Detail})}
\label{tab:shapley_comp_normalized}
\begin{tabular}{lrrrl}
\toprule
Agent & $p_i$ & {Exact $\phi_i(T)$} & {Approx. $\phi_i(T)$} & {Relative Error (\%)} \\
\midrule
1 & 0.05 & 0.0176 & 0.0164 & -6.8 \\
2 & 0.95 & 0.4912 & 0.4918 & +0.1 \\
3 & 0.95 & 0.4912 & 0.4918 & +0.1 \\
\bottomrule
\end{tabular}
\end{table}

\subsubsection{Exact Form}
Whereas the exact Shapley value includes higher-order interaction terms:
\begin{equation}
\phi_i^{\text{exact}} = p_i \cdot \left(1 - \sum_{j \neq i} \frac{p_j}{2} + \sum_{\substack{j<k \\ j,k \neq i}} \frac{p_j p_k}{3} - \cdots \right)
\end{equation}

}

\subsubsection{Error Analysis}
For large $p_i$, the missing terms become significant:
\begin{align*}
\text{Error} &= \phi_i^{\text{exact}} - \phi_i^{\text{approx}} \\
&\approx p_i \left(\sum_{\substack{j<k \\ j,k \neq i}} \frac{p_j p_k}{3} - \sum_{\substack{j<k<l \\ j,k,l \neq i}} \frac{p_j p_k p_l}{4} + \cdots \right)
\end{align*}

\subsection*{Dominant Terms}
When $p_i \to 1$ and other $p_j \ll 1$, the leading error term is:
\begin{equation}
\text{Error} \approx \frac{p_i}{3}\sum_{\substack{j<k \\ j,k \neq i}} p_j p_k + \mathcal{O}(p_i p_j^3)
\end{equation}

\begin{itemize}
\item The approximation captures only first-order interactions ($\mathcal{O}(p_j)$)
\item Exact computation requires all higher-order terms ($\mathcal{O}(p_j^2), \mathcal{O}(p_j^3)$, etc.)
\end{itemize}

Table 7 compares the approximation method with the exact Shapley formula.

\begin{table}[h]
\centering
\caption{Method Comparison}
\begin{tabular}{lll}
\toprule
Metric & Exact Formula & Approximation \\
\midrule
Computational Cost & $O(2^n)$ & $O(n^2)$ \\
Accuracy & Exact & Error grows with $n$ \\
Best For & Small $n$ ($\leq 15$) & Large $n$ with clustered $p_i$ \\
Worst For & Large $n$ & Extremal $p_i$ or synergistic games \\
\bottomrule
\end{tabular}
\end{table}

\bigskip

The theoretical limitations focus on the
non-additive interactions. Further improvements include:

\begin{enumerate}
    \item \textbf{Hybrid Approach}:
    \[
    \text{Use exact computation for extremal } p_i \text{ and approximate the rest}
    \]

    \item \textbf{Error-Corrected Approximation}:
    \medskip
    \[
    \phi_i^{\text{corrected}} = \phi_i^{\text{approx}} + \sum_{j \neq i} \frac{p_i p_j}{n(n-1)}
    \]
\end{enumerate}


\bigskip

\section{Appendix I. Derivation methods for Shapley value formula: Historical remarks}   \label{AppendixI}
The Shapley value is a solution concept in cooperative game theory that assigns a fair distribution of payoffs to players based on their marginal contributions. Over the years, researchers have developed multiple methods to derive or calculate the Shapley value, each offering unique insights. Below is a summary of the key calculation/derivation methods found in the literature. For a deep dive into Shapley original method, these are some essential references for the original approach (Shapley, 1953a, Maschler, Solan, Zamir, 2013, Osborne, Rubinstein, 1994, Rota, 1964 and 1988, Winter, 2002).

In this section, the player set will be denoted by $N$ and the characteristics function is written as $v$.

\subsection{Original axiomatic derivation (Shapley, 1953)}

Shapley derived his value using four axioms:

    \textbf{Efficiency}: The total payoff is fully distributed.

    \textbf{Symmetry}: Players with identical marginal contributions receive equal payoffs.

    \textbf{Dummy Player (Carrier)}: A player who contributes nothing to any coalition gets nothing.

    \textbf{Additivity}: The Shapley value of a sum of games is the sum of the Shapley values of the individual games.

The unique solution satisfying these axioms is:
\[
\phi_i(v) = \sum_{S \subseteq N \setminus \{i\}} \frac{s! (n - s - 1)!}{n!} \left(v(S \cup \{i\}) - v(S)\right)
\]
where \( N \) is the set of all $n$ players, \( v \) is the characteristic function, and $s$ is the size of coalition \( S \).

\subsection{Permutation-based approach}
The Shapley value can be interpreted as the average marginal contribution of a player across all possible orderings of players:
\[
\phi_i(v) = \frac{1}{n!} \sum_{\pi \in \Pi(N)} \left[v(P_\pi^i \cup \{i\}) - v(P_\pi^i)\right]
\]
where \( \Pi(N) \) is the set of all permutations of \( N \), and \( P_\pi^i \) is the set of players preceding \( i \) in permutation \( \pi \).

\subsection{Potential function method} (Hart \& Mas-Colell, 1989)
This method introduces a potential function \( P \) such that:
\[
\phi_i(v) = P(N,v) - P(N \setminus \{i\}, v)
\]
The potential function is recursively defined, ensuring consistency with the Shapley value.

\subsection{Harsanyi dividends} (Harsanyi, 1963)
The Shapley value can be expressed in terms of dividends for each coalition:
\[
\phi_i(v) = \sum_{\substack{S \subseteq N \\ i \in S}} \frac{\Delta_S(v)}{|S|}
\]
where \( \Delta_S(v) \) is the Harsanyi dividend of coalition \( S \), defined recursively.

\subsection{Multi-linear extension method} (Owen, 1972)
This approach uses a continuous extension of the game:

    Represent the game \( v \) as a multilinear function \( f \) over \([0,1]^n\).

    The Shapley value is obtained by integrating the partial derivative:

    \[
    \phi_i(v) = \int_0^1 \frac{\partial f}{\partial x_i} (t, t, \dots, t) \, dt
    \]

\subsection{Probabilistic interpretation} (Weber, 1988)
The Shapley value can be seen as the expected marginal contribution when players join a coalition in a random order, where each ordering is equally likely.

\subsection{Linear programming formulation} (Charnes et al., 1988)
The Shapley value can be derived as the solution to a linear program minimizing a certain imbalance criterion while satisfying efficiency and symmetry.

\subsection{Random order value equivalence} (Roth, 1988)
The Shapley value is the unique random-order value that satisfies symmetry and efficiency.

\subsection{Aumann-Shapley pricing} (Aumann \& Shapley, 1974)
For non-atomic games (continuum of players), the Shapley value is derived via differentiation of a path integral.

\subsection{Approximation and sampling methods}
For large games, exact computation is intractable, so methods like:

    \textbf{Monte Carlo sampling} (Castro et al., 2009)

    \textbf{Structured sampling} (e.g., stratified, antithetic)

    \textbf{Neural network approximations} (e.g., DeepSHAP, Lundberg \& Lee, 2017)

\bigskip

Finally, we supplement Shapley's original proof by deriving the Shapley value formula using a rigorous approach based on definite integrals and combinatorial analysis. This method explicitly reveals the role of the Binomial Theorem and the Beta function in the proof, addressing an omitted part of Shapley's work.

\section{Appendix II. An integral and combinatorial approach to derive the Shapley value formula}  \label{Sec1}
Let $E$ be a finite set of players, and $v: 2^E \rightarrow \Re^+$ be a real-valued function defined on
the power set $2^E$ of $E$, satisfying the following boundary condition and
super-additivity:

\smallskip
\smallskip

(i) $\; v(\emptyset) = 0,  {\rm \; and \;}$

\smallskip

(ii) $\; v(S \cup T) \geq v(S) + v(T) {\rm \;whenever \;} S \cap T = \emptyset.$

\smallskip
\smallskip

For any $R \subseteq E, R \neq \emptyset$, Shapley defined the unanimity game
$$
v_R(S)=\left\{
\begin{array}{lc}
	1  &  {\rm if\;} S \supseteq R \\
	0  & {\rm \;\: otherwise}.
\end{array}
\right.
$$

The Shapley value of player $i$ with respect to such an unanimity game $v_R$ is given by
(Shapley, 1953, Lemma 2)

$$
\phi_i(v_R)=\left\{
\begin{array}{lc}
	\frac{1}{r}  & \hspace*{-0.2in}  {\rm if\;} i \in R \\
	0  & {\rm  \;\; if\;} i \in E \setminus R.
\end{array}
\right.
$$

In the above formula, lower case letter $r$ stands for the number of players in $R$. This notation will be also used with other letters, as needed.

Analogous to the usual expansion of periodic functions into Fourier series, i.e., linear combinations of trigonometric functions, Shapley discovered that for real-valued set functions, i.e., Characteristic functions of games, they can be expanded into linear combinations of unanimity games (Shapley, 1953, Lemma 3), i.e.,

$$v(S) = \sum_{\emptyset \neq R \subseteq E} c_R(v) v_R(S), S \subseteq E,$$ where
$$\hspace{2in} c_R(v) = \sum_{T \subseteq R} (-1)^{r-t}v(T)    \hspace{2in} (8) $$ is the M\"{o}bius coefficient with respect to the unanimity game $R$, a name that
Shapley did not explicitly call in his 1953 paper. M\"{o}bius transform defined on locally finite posets was extensively and intensively studied by Rota (Rota, 1964). In particular, M\"{o}bius transform is widely applied to  Boolean lattices such as the power set of a given set with inclusion relation between subsets (Grabisch, Labreuche 2005).

It follows from Lemmas 2 and 3 that Shapley obtained his initial representation of Shapley value for player $i$:

$$\hspace{1.9in} \phi_i(v) = \sum_{i \in R \subseteq E} \frac{c_R(v)}{r}, \;\;i \in E. \hspace{1.75in} (10)$$

Then he claimed that $\lq\lq$Inserting (8) and simplifying the result gives us"

$$\hspace{0.6in} \phi_i(v) = \sum_{i \in S \subseteq E} \frac{(s-1)! (n-s)!}{n!} v(S)
-\sum_{i \notin S \subseteq E} \frac{s! (n-s-1)!}{n!} v(S), \; i \in E. \hspace{0.6in} (11)$$
The two sums are often put together into one as follows:

$$\hspace{1.35in} \sum_{i \notin S \subseteq E} \frac{s! (n-s-1)!}{n!} \Big {[} v(S \cup {i}) - v(S) \Big {]}, \; i \in E. \hspace{1.38in} (12)$$

However, the omitted simplification can be of great interest for reasons like mathematical rigor, alternating insights, or just curiosity.
In the current literature, the derivation of Formula (12) is  often through permutations or  combinations, $s!$ permutations before the $(s+1)th$ position and $(n-s-1)!$ permutations after this position. Other different methods will be stated later in the Historical Remarks.  A direct counting from Formula (10) seems missing. As such, one purpose of this article is to provide a rigorous integral and combinatorial approach to establish Formula (12).

\subsection {An integral and combinatorial approach to Shapley value calculation}
In this section, for Shapley's claim, $\lq\lq$Inserting (8) and simplifying the result gives us Formula (11)",
we will perform the calculation and simplification through a rigorous definite integral and the Binomial Theorem.

First by plugging  Formula (8) into Formula (10), we get

$$\phi_i(v) = \sum_{i \in R \subseteq E} \; \frac{1}{r} \; \sum_{T \subseteq R} (-1)^{r-t} v(T)$$
$$ = \sum_{i \in R \subseteq E} \; \frac{1}{r} \;
\Big {[}\sum_{i \in T \subseteq R} (-1)^{r-t} v(T) + \sum_{i \notin T \subseteq R} (-1)^{r-t} v(T) \Big {]}
$$

Let $R^\prime = R \setminus \{i\}$ and $T^\prime = T \setminus \{i\}$. Accordingly, it holds that
$r^\prime = r-1$ and  $t^\prime = t -1$. Then we have

$$\phi_i(v) = \sum_{R^\prime \subseteq E \setminus \{i\}} \frac{1}{r^\prime +1}
\sum_{T^\prime \subseteq R^\prime} (-1)^{(r^\prime +1) -(t^\prime+1)}
v(T^\prime \cup \{i\}) + \sum_{R^\prime \subseteq E \setminus \{i\}} \frac{1}{r^\prime +1}
\sum_{T \subseteq R^\prime}(-1)^{(r^\prime+1)-t} v(T)
$$

By changing $T^\prime$ to $S$ in the first sum and $T$ to $S$ in the second sum,
we get

$$\phi_i(v) = \sum_{R^\prime \subseteq E \setminus \{i\}} \frac{1}{r^\prime +1 }
\sum_{S \subseteq R^\prime} (-1)^{(r^\prime +1) -(s+1)} v(S \cup \{i\})
+ \sum_{R^\prime \subseteq E \setminus \{i\}} \frac{1}{r^\prime +1 }
\sum_{S \subseteq R^\prime} (-1)^{(r^\prime +1) -s} v(S)$$
$$=\sum_{R^\prime \subseteq E \setminus \{i\}} \frac{1}{r^\prime +1 }
\sum_{S \subseteq R^\prime} (-1)^{r^\prime - s} \Big {[} v(S \cup \{i\}) - v(S) \Big {]}$$

Next, interchanging the order of the two sums, in the new inner sum first adding the terms that
correspond to the coalitions of a same size $k$ to obtain partial sums,  and then putting all these partial
sums together, we have

$$\phi_i(v) = \sum_{S \subseteq E \setminus \{i\}} \sum_{S \subseteq R^\prime \subseteq E \setminus \{i\}}
\frac{1}{r^\prime +1} (-1)^{r^\prime -s} \Big {[} v(S \cup \{i\}) - v(S) \Big {]} $$

$$= \sum_{S \subseteq E \setminus \{i\}} \sum_{k=s}^{n-1}
\sum_{S \subseteq R^\prime \subseteq E \setminus \{i\}, r^\prime = k} \frac{1}{r^\prime+1}(-1)^{r^\prime -s} \Big {[} v(S \cup \{i\}) - v(S) \Big {]}$$

\begin {equation} \label{1}
= \sum_{S \subseteq E \setminus \{i\}} \sum_{k=s}^{n-1} \binom{n-1-s}{k-s}
\frac{1}{k+1}(-1)^{k-s} \Big {[} v(S \cup \{i\}) - v(S) \Big {]}.
\end{equation}

Let $l = k-s$, then
\begin {equation} \label{2}
\sum_{k=s}^{n-1} \binom{n-1-s}{k-s} \frac{1}{k+1}(-1)^{k-s}
= \sum_{l=0}^{n-1-s} \binom{n-1-s}{l} \frac{1} {l+s+1} (-1)^l.
\end{equation}
Denote this combinatorial expression by $S_{n,s}$.

\bigskip

We next evaluate $S_{n,s}$
using a definite integral representation and the Binomial Theorem.
\bigskip

Substituting $\frac{1}{l+s+1} = \int_0^1 x^{l+s} \, dx$ in $S_{n,s}$ gives

$$
S_{n,s} = \sum_{l=0}^{n-1-s} \binom{n-1-s}{l} (-1)^l \int_0^1 x^{l+s} \, dx.
$$

Interchanging the sum and the integral gives
$$
S_{n,s} = \int_0^1 x^s \sum_{l=0}^{n-1-s} \binom{n-1-s}{l} (-1)^l x^l \, dx.
$$

The inner sum is the binomial expansion of $(1-x)^{n-1-s}. $
Thus,
\begin {equation} \label{S-n-s}
S_{n,s} = \int_0^1 x^s (1-x)^{n-1-s} \, dx,
\end{equation}
which is the value of the standard Beta function, i.e., $S_{n,s} = Beta(s+1, n-s).$

Now by using the Beta-Gamma relation
$Beta(a, b) = \frac{\Gamma(a) \Gamma(b)}{\Gamma(a+b)}$,
we obtain
$$
Beta(s+1, n-s) = \frac{\Gamma(s+1) \Gamma(n-s)}{\Gamma(n+1)}.
$$
Since $\Gamma(s+1) = s!$ and $\Gamma(n+1) = n!$, we get
$ Beta(s+1, n-s) = \frac{s! (n-1-s)!}{n!}.$

\bigskip

Therefore, the following is established:
$$
\sum_{l=0}^{n-1-s} \binom{n-1-s}{l} \frac{(-1)^l}{l+s+1} = \frac{s! (n-1-s)!}{n!}.
$$
This derives Shapley's Formula (12) using the definite integral and combinatorial approach, and hence provides a supplementary calculation to the omitted part in Shapley's original paper.

\section{Conclusion}

The Shapley value has been derived and calculated through multiple methods, such as combinatorial (permutations), algebraic (axiomatic), analytical (multilinear extension), and probabilistic, each offering distinct insights. The choice of method depends on the application (e.g., cooperative games, explainable AI, or cost allocation).

Beyond its classical applications, the Shapley value can also be computed for non-additive measures such as Dempster-Shafer belief functions and Choquet capacities. As a fair solution for distributing payoffs among coalition members, it has become a cornerstone of cooperative game theory and its applied domains. The rapid development of machine learning (ML) has further established the Shapley value as a powerful tool for quantifying feature contributions to ML model behavior and predictions (Rozemberczki et al., 2022).

Practically, this game-theoretic method has been successfully adapted to evaluate the value of devices in computer networks. More broadly, its applications extend to socio-economic and health network analyses, demonstrating its versatility beyond theoretical contexts.

However, the Shapley value faces an inherent challenge: its exponential computational complexity. This issue generally remains unresolved due to the fundamental structure of the Shapley value, though randomized approximation algorithms offer partial solutions.

In this study, we address a special case of random network created via inhomogeneous Bernoulli trails, by developing deterministic algorithms that compute the Shapley value in linear or quadratic time with rigorous discussions of their accuracy. Future work should extend this approach to the more general setting where players join coalitions dependently.

Additionally, we supplement Shapley' original proof by deriving the Shapley' value formula using a rigorous approach based on definite integrals and combinatorial analysis. This method explicitly highlights the roles of the Binomial Theorem and the Beta function, clarifying an omitted aspect of Shapley' work. We hope this contribution provides deeper insights for future research on the Shapley value.

\newpage

\end{document}